\newif\ifprocs
\procstrue
\procsfalse 

\ifprocs
\documentclass[oribibl]{myllncs}
\else
\documentclass[11pt]{article}
\usepackage{amsthm}
\usepackage{fullpage}
\fi 

\usepackage{graphicx}

\usepackage{verbatim} 
\usepackage[noend]{algorithmic}
\usepackage{algorithm}
\usepackage{amsmath,amssymb,amsfonts}

\usepackage{color}
\usepackage{xspace}

\usepackage{ifpdf}
\ifpdf    
\usepackage{hyperref}
\else    
\usepackage[hypertex]{hyperref}
\fi

\ifprocs

\spnewtheorem{observation}{Observation}{\bfseries}{\itshape}
\else
\newtheorem{theorem}{Theorem}[section]
\newtheorem{corollary}[theorem]{Corollary}
\newtheorem{lemma}[theorem]{Lemma}
\newtheorem{observation}[theorem]{Observation}
\newtheorem{definition}[theorem]{Definition}

\newtheorem{question}[theorem]{Question}
\newtheorem{remark}[theorem]{Remark}

\fi

\ifprocs
\newcommand{\proofof}[1]{#1}
\else
\newcommand{\proofof}[1]{Proof #1}
\fi

\def\ala{\`a la\xspace}

\newcommand{\calf}{\mathcal{F}}
\newcommand{\redtw}{\textsc{ReduceGraphTW}\xspace}
\newcommand{\naive}{\textsc{ReduceGraphNaive}\xspace}
\newcommand{\separator}{\textsc{Separator}}

\newcommand{\R}{{\mathbb R}}

\DeclareMathOperator{\trees}{\sf Trees}
\DeclareMathOperator{\planar}{\sf Planar}
\DeclareMathOperator{\tw}{\sf Treewidth}

\def\compactify{\itemsep=0pt \topsep=0pt \partopsep=0pt \parsep=0pt}

\title{Preserving Terminal Distances using Minors%
\thanks{A preliminary version of this paper appeared in Proceedings of ICALP 2012.
This work was supported in part by The Israel Science Foundation
(grant \#452/08), by a US-Israel BSF grant \#2010418, 
and by the Citi Foundation.}
}

\ifprocs
\author{Robert Krauthgamer
\and 
Tamar Zondiner
}
\institute{Weizmann Institute of Science, Rehovot, Israel.\\
\email{\{robert.krauthgamer,tamar.zondiner\}@weizmann.ac.il}
}

\else
\author{Robert Krauthgamer
\qquad\qquad Tamar Zondiner
\\ Weizmann Institute of Science
\\ \texttt{\{robert.krauthgamer,tamar.zondiner\}@weizmann.ac.il}
}
\fi

\begin{document}

\maketitle

\begin{abstract}
We introduce the following notion of compressing an undirected graph $G$ 
with (nonnegative) edge-lengths and terminal vertices $R\subseteq V(G)$. 
A \emph{distance-preserving minor} is a minor $G'$ (of $G$) 
with possibly different edge-lengths, such that $R\subseteq V(G')$ and 
the shortest-path distance between every pair of terminals 
is exactly the same in $G$ and in $G'$.
We ask: what is the smallest $f^*(k)$ such that every graph $G$ with $k=|R|$ terminals 
admits a distance-preserving minor $G'$ with at most $f^*(k)$ vertices?

Simple analysis shows that $f^*(k)\le O(k^4)$.
Our main result proves that $f^*(k)\ge \Omega(k^2)$,
significantly improving over the trivial $f^*(k)\ge k$.
Our lower bound holds even for planar graphs $G$,
in contrast to graphs $G$ of constant treewidth,
for which we prove that $O(k)$ vertices suffice. 
\end{abstract}

\section{Introduction}
A \emph{graph compression} of a graph $G$ is 
a small graph $G^*$ that preserves certain features (quantities) of $G$,
such as distances or cut values.
This basic concept was introduced by Feder and Motwani \cite{FM95},
although their definition was slightly different technically.
(They require that $G^*$ has fewer edges than $G$,
and that each graph can be quickly computed from the other one.)
Our paper is concerned with preserving the selected features of $G$ 
\emph{exactly} (i.e., lossless compression), but in general 
we may also allow the features to be preserved approximately.

The algorithmic utility of graph compression is readily apparent -- 
the compressed graph $G^*$ may be computed as a preprocessing step,
and then further processing is performed on it (instead of on $G$)
with lower runtime and/or memory requirement.
This approach is clearly beneficial 
when the compression can be computed very efficiently, say in linear time,
in which case it may be performed on the fly,
but it is useful also when some computations are to be performed (repeatedly) 
on a machine with limited resources such as a smartphone,
while the preprocessing can be executed in advance on much more powerful 
machines.

For many features, graph compression was already studied and many results are 
known.
For instance, a \emph{$k$-spanner} of $G$ is a subgraph $G^*$ 
in which all pairwise distances approximate those in $G$ within a factor of $k$ 
\cite{PS89}.
Another example, closer in spirit to our own, is a \emph{sourcewise distance preserver} of $G$ with respect to a set of vertices $R\subseteq V(G)$; this is a subgraph $G^*$ of $G$ that preserves (exactly) the distances in $G$ for all pairs of vertices in $R$ \cite{CE06}. 
We defer the discussion of further examples and related notions 
to Section \ref{sec:related}, and here point out only two phenomena:
First, it is common to require $G^*$ to be structurally similar to $G$ 
(e.g., a spanner is a subgraph of $G$),
and second, sometimes only the features of a subset $R$ need to be preserved
(e.g., distances between vertices of $R$). 

\medskip
We consider the problem of compressing a graph so as to maintain
the shortest-path distances among a set $R$ of required vertices.
From now on, the required vertices will be called \emph{terminals}.

\begin{definition} \label{defn:dpm}
Let $G$ be a graph with edge lengths $\ell:E(G)\to\R_+$
and a set of terminals $R\subseteq V(G)$.
A \emph{distance-preserving minor} (of $G$ with respect to \ $R$) 
is a graph $G'$ with edge lengths $\ell':E(G')\to\R_+$ satisfying:
\begin{enumerate} \compactify
\item $G'$ is a minor of $G$; and
\item $d_{G'}(u,v)=d_G(u,v)$ for all $u,v\in R$.
\end{enumerate}
\end{definition}
Here and throughout, $d_H$ denotes the shortest-path distance in a graph $H$.
It also goes without saying that the terminals $R$ must survive the minor operations
(they are not removed, but might be merged with non-terminals, due to edge contractions),
and thus $d_{G'}(u,v)$ is well-defined; in particular, $R\subseteq V(G')$. 
For illustration, suppose $G$ is a path of $n$ unit-length edges
and the terminals are the path's endpoints; then by contracting all the edges, 
we can obtain $G'$ that is a single edge of length $n$.

The above definition basically asks for 
a minor $G'$ that preserves all terminal distances exactly. 
The minor requirement is a common method to induce structural similarity 
between $G'$ and $G$, and in general excludes the trivial solution 
of a complete graph on the vertex set $R$ (with appropriate edge lengths).
The above definition may be viewed as a conceptual contribution of our paper,
and indeed our main motivation is its mathematical elegance,
but for completeness we also present potential algorithmic applications
in section \ref{sec:applications}.

We raise the following question,
which to the best of our knowledge was not studied before.
Its main point is to bound the size of $G'$ independently of the size of $G$.
\begin{question} \label{Q1}
What is the smallest $f^*(k)$, such that for every graph $G$ with 
$k$ terminals,
there is a distance-preserving minor $G'$ with at most $f^*(k)$ vertices?
\end{question}

Before describing our results, let us provide a few initial observations,
which may well be folklore or appear implicitly in literature.
There is a \emph{naive algorithm} which constructs $G'$ from $G$ 
by two simple steps
(Algorithm \ref{alg:naive} in Section \ref{sec:trivial}):
\begin{enumerate} \compactify
\item[(1)]
Remove all vertices and edges in $G$ that do not
participate in any shortest-path between terminals.
\item[(2)]
Repeat while the graph contains a non-terminal $v$ of degree two:
merge $v$ with one of its neighbors (by contracting the appropriate edge),
thereby replacing the $2$-path $w_1-v-w_2$ with a single edge 
$(w_1,w_2)$ of the same length as the $2$-path.

\end{enumerate}
It is straightforward to see that these steps reduce the number of 
non-terminals without affecting terminal distances,
and a simple analysis proves that this algorithm always produces 
a minor with $O(k^4)$ vertices and edges (and runs in polynomial time).
It follows that $f^*(k)$ exists, and moreover
$$f^*(k) \le O(k^4).$$
Furthermore, if $G$ is a tree then $G'$ has at most $2k-2$ vertices,
and this last bound is in fact tight (attained by a complete binary tree)
whenever $k$ is a power of $2$.
We are not aware of explicit references for these analyses, 
and thus review them in Section \ref{sec:trivial}.

\subsection{Our Results} \label{sec:results}

Our first and main result directly addresses Question \ref{Q1},
by providing the lower bound $f^*(k)\ge \Omega(k^2)$.
The proof uses only simple planar graphs,
leading us to study the restriction of $f^*(k)$ to specific graph families, 
defined as follows.%
\footnote{We use $(V,E,\ell)$ to denote a graph with vertex set $V$, 
edge set $E$, and edge lengths $\ell:E\to\R_+$.
As usual, the definition of a family $\calf$ of graphs refers only to the
vertices and edges, and is irrespective of the edge lengths.
}

\begin{definition} \label{defn:fstar}
For a family $\calf$ of graphs, define $f^*(k, \calf)$ as the minimum value 
such that every graph $G=(V,E,\ell)\in \calf$ with $k$ terminals
admits a distance-preserving minor $G'$ with at most $f^*(k, \calf)$ vertices. 
\end{definition} 

\begin{theorem} \label{thm:main}
Let $\planar$ be the family of all planar graphs. Then 
$$f^*(k)\ge f^*(k, \planar)\ge \Omega(k^2).$$
\end{theorem}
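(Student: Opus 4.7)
The plan is to exhibit an explicit family of planar graphs $G_k$ on $\Theta(k^2)$ vertices, each carrying $k$ terminals, and show that every distance-preserving minor of $G_k$ must retain $\Omega(k^2)$ vertices. The underlying idea is to engineer $\Omega(k^2)$ terminal pairs whose shortest paths are all ``witnessed'' by distinct, indispensable non-terminal vertices, so that no branch set of the minor can absorb two witnesses without violating some terminal distance.

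For the construction, I would take $G_k$ to be a grid-like planar graph with $k/2$ terminals $s_1,\ldots,s_{k/2}$ placed along one side of the outer face and $k/2$ terminals $t_1,\ldots,t_{k/2}$ placed along the opposite side, and with $\Theta(k^2)$ interior non-terminal vertices arranged as a regular mesh. Edge lengths would be assigned as small generic perturbations of a base uniform length so that, for every ordered pair $(s_i,t_j)$, the shortest path $P_{ij}$ in $G_k$ is unique and passes through a designated interior vertex $v_{ij}$, with the map $(i,j)\mapsto v_{ij}$ injective. The perturbations can be chosen so that the $P_{ij}$ interact only in controlled ways: no interior vertex other than $v_{ij}$ is forced by the pair $(s_i,t_j)$, and no shortcut between terminals is hiding outside the $P_{ij}$.

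For the lower bound, let $G'$ be any distance-preserving minor of $G_k$ and consider the associated partition of $V(G_k)$ into branch sets (one per vertex of $G'$). For every pair $(s_i,t_j)$, a shortest path in $G'$ realizing $d_{G_k}(s_i,t_j)$ pulls back to a walk in $G_k$ of the same length; the genericity of the edge lengths forces this walk to traverse $v_{ij}$, so $v_{ij}$ lies in some non-terminal branch set $B_{ij}$. I would then argue that distinct pairs yield distinct branch sets: if $B_{ij}=B_{i'j'}$ for $(i,j)\neq(i',j')$, the corresponding merged vertex in $G'$ lies simultaneously on both shortest paths, and the planar layout together with the generic weights implies that such a merging creates a strictly shorter alternative route between some terminal pair, contradicting distance preservation. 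Since there are $k^2/4$ pairs, this yields $\Omega(k^2)$ non-terminal vertices in $G'$.

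The main obstacle is the last step --- ruling out that a single non-terminal vertex of $G'$, equipped with cleverly reassigned edge lengths, could faithfully witness several pairs at once. Because $G'$ is permitted arbitrary nonnegative edge lengths, this rigidity is not automatic and must be extracted from the planar topology of $G_k$. I expect the technical heart of the argument to be a planarity-based combinatorial statement: the non-crossing layout of the vertices $v_{ij}$ in the plane embedding, combined with injectivity and genericity of the weight assignment, makes any consolidation of two characteristic vertices incompatible with simultaneously preserving all $\Omega(k^2)$ terminal distances. Choosing a construction for which this planar rigidity can be verified cleanly --- rather than through a messy case analysis --- is the delicate design choice that drives the proof.
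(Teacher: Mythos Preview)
Your proposal has a genuine gap at the pull-back step. You assert that a shortest path in $G'$ realizing $d_{G_k}(s_i,t_j)$ ``pulls back to a walk in $G_k$ of the same length,'' and that genericity then forces this walk to traverse $v_{ij}$. But the edge lengths $\ell'$ in $G'$ are reassigned arbitrarily and bear no relation to path lengths in $G_k$: an edge $(a,b)$ of $G'$ with length $\ell'(a,b)$ corresponds only to adjacency of the branch sets $B_a,B_b$ in $G_k$, not to any walk of length $\ell'(a,b)$ there. So you get no length-preserving pull-back, and genericity of the original weights tells you nothing about which branch sets a shortest path in $G'$ visits. Even ignoring this, the target ``distinct pairs yield distinct branch sets'' is stronger than what holds: in any grid a single interior vertex already lies on $\Theta(k^2)$ terminal-pair shortest paths, so one should aim to show that a vertex of $G'$ serves at most $O(k)$ pairs, not at most one.

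The paper's argument is structurally different and avoids both issues by working entirely inside $G'$. Since $G'$ is planar, the Lipton--Tarjan separator theorem yields a set $S$ with $|S|=O(\sqrt{|V(G')|})$ such that $\Theta(k)$ terminals land on each side; every one of the $\Theta(k^2)$ cross shortest paths in $G'$ must hit $S$. The key lemma shows that any single vertex $v\in V(G')$ lies on at most $O(k)$ of these shortest paths: encode the pairs routed through $v$ as edges of a bipartite graph on the two terminal groups and prove it is acyclic. Acyclicity is forced by an arithmetic identity among terminal distances that a cycle would imply, which the explicit weights (vertical edges in column $x$ have length $1+1/(2^{x^2}k)$) are designed to violate. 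Comparing $|S|\cdot O(k)$ with $\Theta(k^2)$ then gives $|V(G')|\ge\Omega(k^2)$. Notice this uses only the exact terminal distances and planarity of $G'$, never a correspondence between paths in $G'$ and paths in $G$; indeed it proves the stronger statement that \emph{no} planar graph on $o(k^2)$ vertices realizes these terminal distances.
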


Our proof of this lower bound uses a two-dimensional grid graph, 
which has super-constant treewidth.
This stands in contrast to graphs of treewidth $1$, 
because we already mentioned that
$$f^*(k,\trees) \le 2k-2,$$
where $\trees$ is the family of a all tree graphs.
It is thus natural to ask whether bounded-treewidth graphs 
behave like trees, for which $f^* \le O(k)$,
or like planar graphs, for which $f^*\ge \Omega(k^2)$. 
We answer this question as follows.

\begin{theorem} \label{thm:tw}
Let $\tw(p)$ be the family of all graphs with treewidth at most $p$. 
Then for all $k\ge p$,
$$ \Omega(pk) \le f^*(k, \tw(p))\le O(p^3k).$$
\end{theorem}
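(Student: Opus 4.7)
I would address the two bounds separately, both by leveraging the tree-decomposition structure of $G$.

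For the \emph{upper bound} $f^*(k,\tw(p))\le O(p^3k)$, the plan is divide-and-conquer on a width-$p$ tree decomposition $(T,\{B_t\})$ of $G$. By a standard balanced-separator argument, there is a bag $B_{t^*}$ of size at most $p+1$ whose removal partitions the terminal set $R$ into two sides, each holding at most $2k/3$ terminals. I would recurse on each side, treating the vertices of $B_{t^*}$ as additional boundary terminals, and then glue the two resulting minors along the common boundary $B_{t^*}$. Since $B_{t^*}$ vertex-separates $G$, any original shortest path between terminals on opposite sides passes through $B_{t^*}$, so if each side preserves all $(B_{t^*}\cup R_{\text{side}})$-distances exactly, the glued graph preserves all original terminal distances. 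The recursion bottoms out at sub-instances with $O(p)$ terminals-plus-boundary, which the naive algorithm compresses to $O(p^4)$ vertices. The recurrence $T(k)\le 2T(2k/3+p+1)+O(p)$ solves to $T(k)=O(p^3k)$: the recursion tree has $O(k/p)$ leaves each contributing $O(p^4)$ vertices.

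For the \emph{lower bound} $f^*(k,\tw(p))\ge\Omega(pk)$, the plan is to chain $\Theta(k/p)$ copies of a gadget of treewidth at most $p$, each carrying $\Theta(p)$ terminals and requiring $\Omega(p^2)$ vertices in every distance-preserving minor. Such a gadget is exactly the hard instance underlying Theorem \ref{thm:main}, instantiated with $k'=\Theta(p)$: a $\Theta(p)\times\Theta(p)$ grid-like planar graph of treewidth $O(p)$ with $\Theta(p)$ terminals whose every distance-preserving minor has $\Omega(p^2)$ vertices. Chaining the gadgets in a path via single shared terminals (or calibrated connecting edges of large length) yields a graph of treewidth $p+O(1)$ with $\Theta(k)$ terminals, and any distance-preserving minor must contain $\Theta(k/p)\cdot\Omega(p^2)=\Omega(pk)$ vertices.

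The main obstacles lie at the interfaces. For the upper bound, one must verify that the glued graph is genuinely a minor of $G$ (not merely an abstract graph on the union of vertex sets) and that cross-separator distances are preserved exactly; both should follow from the separator property of $B_{t^*}$, but they require careful bookkeeping of the contractions on each side. For the lower bound, the challenge is to prevent a clever global minor from ``sharing'' Steiner vertices between different gadgets; I would address this by choosing the inter-gadget connection so that every gadget is an isometric subgraph of the chain, forcing the restriction of any distance-preserving minor to each gadget to itself be distance-preserving, thus reducing the chain analysis to the single-gadget analysis guaranteed by Theorem \ref{thm:main}.
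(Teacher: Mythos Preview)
Your lower-bound plan matches the paper's in spirit (reuse the grid instance of Theorem~\ref{thm:main} with $k'=\Theta(p)$ as a gadget), but the paper's execution is simpler: it takes $k/p$ \emph{disjoint} copies of the $p\times p$ grid rather than chaining them. Since a minor of a disjoint union is a disjoint union of minors of the components, the per-copy $\Omega(p^2)$ bound sums immediately to $\Omega(pk)$, with no need for the isometric-subgraph or ``no shared Steiner vertices'' argument you anticipate. Your chained construction would also work, but it introduces an interface issue that disjointness sidesteps entirely.

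Your upper-bound plan has a genuine gap: the accumulation of boundary vertices is not controlled. At each level of your recursion you add a bag $B_{t^*}$ of size $p{+}1$ to the boundary, and there is no mechanism preventing these from piling up along a root-to-leaf path; after depth $\Theta(\log(k/p))$ a leaf can carry $\Theta(p\log(k/p))$ boundary vertices, not $O(p)$. Consequently your claim that ``the recursion bottoms out at sub-instances with $O(p)$ terminals-plus-boundary'' is unjustified, and feeding such a leaf to \naive yields $O(p^4\log^4(k/p))$ rather than $O(p^4)$ vertices. Separately, the recurrence $T(k)\le 2T(2k/3+p{+}1)+O(p)$ as written does \emph{not} solve to $O(p^3k)$: its recursion tree has $(k/p)^{\log_{3/2}2}\approx (k/p)^{1.71}$ leaves, not $O(k/p)$. (A sharper count using $r_1+r_2\le r$ does give $O(k/p)$ leaves, but this is not what your recurrence encodes.)

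The paper closes exactly this gap with a second separator at every step, \ala Bodlaender: after separating to balance the terminals, it separates each side \emph{again} to balance the accumulated boundary. A short induction (the paper's Lemma on $|B_a|<6p$) then shows the boundary never exceeds $6p$, so every leaf genuinely has $O(p)$ terminals-plus-boundary and contributes $O(p^4)$ vertices. This double-separator step is the missing technical device in your outline.
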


We summarize our results together with some initial observations
in the table below.

\begin{center}
\begin{tabular}{|l|cc|l|}\hline
Graph Family $\calf$ 
  & \multicolumn{2}{|c|}{Bounds on $f^*(k,\calf)$} &  \\
\hline
Trees & $=2k-2$ &
  & Theorems \ref{thm:lbtree}, \ref{thm:ubtree}\\
Treewidth $p$ & $\Omega(pk)$ & $O(p^3k)$ &  Theorem \ref{thm:tw}\\
Planar Graphs & $\Omega(k^2)$ & $O(k^4)$ &  Theorems \ref{thm:main}, \ref{thm:k4}\\
All Graphs & $\Omega(k^2)$ & $O(k^4)$ &  Theorems \ref{thm:main}, \ref{thm:k4}\\
\hline
\end{tabular}
\end{center}

All our upper bounds are algorithmic and run in polynomial time. 
In fact, they can be achieved using the naive algorithm (Algorithm \ref{alg:naive} in Section \ref{sec:trivial}).

\subsection{Related Work} \label{sec:related}

Coppersmith and Elkin \cite{CE06} studied a problem similar to ours, except that they seek subgraphs with few edges (rather than minors). Among other things, they prove that for every weighted graph $G=(V,E)$ and every set of $k=O(|V|^{{1}/{4}})$ terminals (sources), there exists a weighted subgraph $G'=(V,E')$, called \emph{a source-wise preserver}, that preserves terminal distances exactly
and has $|E'|\le O(|V|)$ edges.
They also show a nearly-matching lower bound on $|E'|$.
Dor, Halperin and Zwick \cite{DHZ00} similarly asked for a graph with few edges, though not necessarily a subgraph or a minor, that preserves all distances.
Woodruff \cite{Woodruff06} combined their notion of \emph{emulators} with Coppersmith and Elkin's \emph{source-wise preservers}, and studied the size of arbitrary graphs preserving only distances between given sets of terminals (sources) in the given graph $G$.

Some compressions preserve cuts and flows in a given graph $G$ rather than distances. A Gomory-Hu tree \cite{GH61} is a weighted tree that preserves all $st$-cuts in $G$ (or just between terminal pairs).
A so-called mimicking network preserves all flows and cuts between 
subsets of the terminals in $G$ \cite{HKNR98}.

Terminal distances can also be approximated instead of preserved exactly. 
In fact, allowing a constant factor approximation may be sufficient to obtain a compression $G^*$ without any non-terminals. Gupta \cite{Gupta01} introduced this problem and proved that for every weighted tree $T$ and set of terminals, there exists a weighted tree $T'$ without the non-terminals that approximates all terminal distances within a factor of $8$. 
It was later observed that this $T'$ is in fact a minor of $T$ \cite{CGNRS06}, 
and that the factor $8$ is tight \cite{CXKR06}.
Basu and Gupta \cite{BG08} claimed that a constant approximation factor exists for weighted outerplanar graphs as well. 
It remains an open problem whether the constant factor approximation extends 
also to planar graphs (or excluded-minor graphs in general). Englert et al. \cite{EGKRTT10} proved a randomized version of this problem for all excluded-minor graph families, with an expected approximation factor depending only on the size of the excluded minor. 

The relevant information (features) in a graph can also be maintained 
by a data structure that is not necessarily graphs. 
A notable example is Distance Oracles -- low-space data structures that can answer distance queries (often approximately) in constant time \cite{TZ05}. These structures adhere to our main requirement of ``compression'' 
and are designed to answer queries very quickly.
However, they might lose properties that are natural in graphs,
such as the triangle inequality or the similarity of a minor to the given graph,
which may be useful for further processing of the graph.

\subsection{Potential Applications} \label{sec:applications}

Our first example application is in the context of algorithms
dealing with graph distances.
Often, algorithms that are applicable to an input graph $G$ 
are applicable also to a minor of it $G'$
(e.g., algorithms for planar graphs).
Consider for instance the Traveling Salesman Problem (TSP),
which is known to admit a QPTAS in excluded-minor graphs \cite{GS02}
(and PTAS in planar graphs \cite{Klein08}),
even if the input contains a set of \emph{clients}
(a subset of the vertices that must be visited by the tour).
Suppose now that the clients change daily,
but they can only come from a fixed and relatively small set $R\subset V(G)$ 
of potential clients.
Obviously, once a distance-preserving minor $G'$ of $G$ is computed,
the QPTAS can be applied on a daily basis to the small graph $G'$ 
(instead of to $G$).
Notice how important it is to preserve all terminal distances exactly
using $G'$ that is a minor of $G$ (a complete graph on vertex set $R$ 
would not work, because we do not have a QPTAS for it).

Our second example application is in the field of metric embeddings. 
Consider a known embedding, such as the embedding of a bounded-genus graph $G$
into a distribution over planar graphs \cite{IS07}. 
Suppose we want to use this embedding, but we only care about 
a small subset of the vertices $R\subset V(G)$.
We can compute a distance-preserving minor $G'$ (and thus with same genus)
that has at most $f^*(|R|)$ vertices,
and then apply the said embedding to the small graph $G'$ (instead of to $G$).
The resulting planar graphs will all have $f^*(|R|)$ vertices,
independently of $|V(G)|$.
In (other) cases where the embedding's distortion depends on $|V(G)|$,
this approach may even yield improved distortion bounds,
such as replacing $O(\log|V(G)|)$ terms with $O(\log |R|)$.

\section{Review of Straightforward Analyses} \label{sec:trivial}
As described in the introduction, a naive way to create a minor $G'$ of $G$ preserving terminal distances is to perform the steps described in \naive, depicted below as Algorithm \ref{alg:naive}. In this section we show that for general graphs $G$, the returned minor has at most $O(k^4)$ vertices, and for trees it has at most $2k-2$ vertices. 
\algsetup{indent=2em}
\begin{algorithm}                      
\caption{\naive(graph $G$, required vertices $R$)}          
\label{alg:naive}                           
\begin{algorithmic}[1]

\STATE Remove non-terminals and edges that do not participate in any terminal-to-terminal shortest-path.

\WHILE{there exists a non-terminal $v$ incident to only two edges $(v,u)$ and $(v,w)$}
\STATE contract the edge $(u,v)$,
\STATE set the length of edge $(u,w)$ to be $d_G(u,w)$.
\ENDWHILE
\end{algorithmic}
\end{algorithm}

It is easy to see that $G'$ is a distance-preserving minor of $G$ with respect to $R$.

\subsection{$f^*(k)\le O(k^4)$ for General Graphs}

\begin{theorem} \label{thm:k4}
For every graph $G$ and set $R\subseteq V$ of $k$ terminals, the output $G'$ of $\naive(G, R)$ is a distance-preserving minor of $G$ with at most $O(k^4)$ vertices. 
In particular, $f^*(k)\le O(k^4)$.
\end{theorem}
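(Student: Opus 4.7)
The plan is to verify correctness and then bound $|V(G')|$ by $O(k^4)$ via a charging argument.

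Correctness is immediate from the description of \naive. Step~1 removes only non-terminals and edges that lie on no shortest path between terminals, so it cannot affect any terminal-to-terminal distance. Step~2 contracts a non-terminal $v$ of degree~2 with neighbors $w_1, w_2$, replacing the 2-path $w_1$--$v$--$w_2$ by a single edge of length $\ell(w_1, v) + \ell(v, w_2)$; this preserves all distances involving $w_1$, $w_2$, or any other vertex. Every operation is a vertex deletion, edge deletion, or edge contraction, so $G'$ is a minor of $G$.

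For the size bound, the key observation is that every non-terminal $v \in V(G')$ has $\deg_{G'}(v) \geq 3$: after step~1 each non-terminal has degree at least~2 (a shortest path through it provides an entering and a leaving edge), and step~2 removes exactly those of degree~2. The plan is then to fix, for each unordered pair $\{a, b\} \subseteq R$, a canonical shortest path $P_{a, b}$ in $G'$, using a consistent tie-breaking rule so that whenever two canonical paths share two endpoints $x, y$, their subpaths between $x$ and $y$ coincide. Under this condition, any two canonical shortest paths intersect in at most one contiguous subpath, and hence contribute at most two vertices where they diverge.

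The charging then works as follows: for each non-terminal $v \in V(G') \setminus R$, since $\deg_{G'}(v) \geq 3$ and every edge at $v$ lies on some shortest path between terminals, one can exhibit two shortest paths passing through $v$ that use different pairs of edges at $v$ (indeed, three shortest paths through three distinct edges at $v$ cannot all share the same entry/exit edge-pair). Taking these to be canonical shortest paths $P_{a_1, b_1}, P_{a_2, b_2}$ and charging $v$ to the resulting pair gives at most $\binom{\binom{k}{2}}{2} = O(k^4)$ possible charges, while each pair of canonical paths accounts for at most two divergence vertices by the previous paragraph. Hence $|V(G') \setminus R| \leq O(k^4)$, and so $|V(G')| \leq O(k^4) + k = O(k^4)$.

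The main obstacle lies in justifying the last charging step rigorously. Since step~1 keeps every vertex lying on \emph{any} shortest path, a non-terminal of $G'$ might in principle sit only on non-canonical shortest paths, in which case the canonical shortest paths through it (if any) need not diverge at it. Resolving this requires either choosing the tie-breaking rule (for instance one induced by shortest-path trees rooted at each terminal in turn) to guarantee that every surviving non-terminal lies on two canonical paths diverging at it, or generalizing the divergence-counting argument to suitable non-canonical shortest paths while keeping the total charge at $O(k^4)$.
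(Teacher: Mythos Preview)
Your approach is essentially the paper's: bound the non-terminals by counting branching vertices of pairs of canonical shortest paths, using that any two such paths branch at most twice. The obstacle you flag is real but has exactly the easy fix you yourself list first: perturb the edge lengths infinitesimally before running \naive, so that every terminal pair has a \emph{unique} shortest path. Then step~1 retains only vertices and edges lying on these unique (hence canonical) paths, so every edge incident to a surviving non-terminal $v$ lies on some canonical $P_{a,b}$; with $\deg_{G'}(v)\ge 3$ this forces two canonical paths through $v$ to use different edge-pairs there, and the charging goes through verbatim. This perturbation is precisely what the paper invokes (phrased as consistent tie-breaking ``by using extremely small perturbations to edge-weights''), so once you commit to it your argument and the paper's coincide.
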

\begin{proof}
We need the following lemma, whose proof is sketched below. A detailed proof is shown in \cite[Lemma 7.5]{CE06}, where it is used to bound the number of edges in the graph $G'$ after only performing on a graph the edge-removals in line 1 of \naive. 

\begin{lemma}
Let $G$ be a graph, and suppose that ties between shortest paths 
(conneting the same pair of terminals) are broken in a consistent way.
Then every two distinct shortest paths between terminals in $G$, 
denoted $\Pi$ and $\Pi '$, branch in at most two vertices, 
i.e., there at most two vertices $v\in V(\Pi)\cap V(\Pi ')$ such that $\operatorname{succ}_\Pi(v)\notin V(\Pi ')$ or $\operatorname{pred}_\Pi(v)\notin V(\Pi ')$. 
\end{lemma}
\begin{proof}[\proofof{Sketch}]
Suppose that ties between two shortest paths are broken in a consistent way (by using extremely small perturbations to edge-weights when computing the shortest paths). Let $v_1$ and $v_2$ be the first and last vertices on the path $\Pi$ such that $v_1,v_2\in V(\Pi)\cap V(\Pi ')$. Then the path between $v_1$ and $v_2$ is shared in both the shortest path $\Pi$ and $\Pi '$, and contains no additional branching vertices. 
\ifprocs\qed\fi
\end{proof}

Every non-terminal $v\in V'\setminus R$ has degree greater or equal to 3, hence it is a branching vertex. Every pair of shortest paths contributes at most 2 branching vertices to $G'$. There are $O(k^4)$ such pairs, and therefore $O(k^4)$ vertices in $V'$. Since $G'$ is also a distance-preserving minor of $G$ with respect to $R$, this completes the proof of Theorem \ref{thm:k4}.
\ifprocs\qed\fi
\end{proof}

It is interesting to note that $G'$ is relatively sparse, having only $O(k^4)$ edges as well as vertices. It is easy to see that any branching vertex between two paths $\Pi_1, \Pi_2$ that participates also in the path $\Pi(t_1, t_2)$ is also a branching vertex between one of these paths $\Pi_i$ and $\Pi(t_1, t_2)$ itself. Therefore, at most $O(k^2)$ vertices, and hence also edges, appear on the contracted path between $t_1$ and $t_2$ in $G'$, and $G'$ has at most $O(k^4)$ edges overall.

\subsection{$f^*(k, \trees)=2k-2$}

\begin{theorem} \label{thm:ubtree}
For every tree $G$ and set $R\subseteq V$ of $k$ terminals, the output $G'$ of $\naive(G, R)$ is a distance-preserving minor of $G$ with at most $2k-2$ vertices. 
In particular, $f^*(k, \trees)\le 2k-2$.
\end{theorem}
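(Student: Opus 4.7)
The plan is to exploit the tree structure of $G$ to show that the two cleanup steps of \naive leave a tree with very restricted local structure, and then apply a standard handshake-lemma count.

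First, I would observe that since $G$ is a tree, the shortest path between any two terminals is the unique tree path between them. Therefore, line 1 of \naive removes exactly those vertices and edges that do not lie on any tree path connecting two terminals; what remains is the Steiner subtree of $R$ in $G$, which is still a tree. The subsequent contractions in line 2 preserve this tree structure (contracting an edge in a tree produces a tree). Hence $G'$ is a tree containing $R$.

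Second, I would establish two structural properties of $G'$. (i) Every leaf of $G'$ is a terminal: a non-terminal leaf of the Steiner subtree would not be an internal vertex of any tree path between two terminals, so it would have been removed in line 1; contractions in line 2 never introduce leaves. (ii) Every non-terminal vertex of $G'$ has degree at least $3$: non-terminals of degree $\le 2$ (the case of degree $1$ is excluded by (i) and the case of degree $2$ is exactly the while-loop condition) would have been contracted away in line 2.

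Third, I would apply the handshake lemma on the tree $G'$. Let $n_1$, $n_2$, $n_{\ge 3}$ denote the number of vertices of degree exactly $1$, exactly $2$, and at least $3$ in $G'$, and set $n=|V(G')|$. From $\sum_{v}\deg(v)=2(n-1)$ one gets
\[
n_1+2n_2+3n_{\ge 3}\le 2(n_1+n_2+n_{\ge 3})-2,
\]
i.e., $n_{\ge 3}\le n_1-2$. By (i), $n_1\le k$, and by (ii), every non-terminal of $G'$ is counted in $n_{\ge 3}$. Therefore the number of non-terminals in $G'$ is at most $n_{\ge 3}\le n_1-2\le k-2$, giving $|V(G')|\le k+(k-2)=2k-2$.

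The only potential pitfall is bookkeeping: a terminal may itself be an internal vertex of degree $2$ or of degree $\ge 3$ in $G'$, but this causes no trouble because all terminals are accounted for in the $k$ term and only non-terminals need to be bounded via the handshake inequality. Consequently $f^*(k,\trees)\le 2k-2$, and since $G'$ is a distance-preserving minor of $G$ (inherited from the general correctness of \naive), the theorem follows.
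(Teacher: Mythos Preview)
Your proof is correct and follows essentially the same approach as the paper: both arguments use that every non-terminal in the output tree $G'$ has degree at least $3$, and then apply the handshake lemma on the tree to bound the number of non-terminals by $k-2$. Your version adds a bit more scaffolding (explicitly verifying that $G'$ remains a tree and that every leaf is a terminal), and organizes the degree count by degree classes $n_1,n_2,n_{\ge 3}$ rather than directly by terminal/non-terminal, but the core inequality and conclusion are identical to the paper's.
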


\begin{proof}
Every non-terminal $v\in V'\setminus R$ has degree greater or equal to 3. Let $s$ denote the number of non-terminals in the tree $G'$. Then 
$$\sum_{v\in V'} \deg_{G'}(v) \ge k + 3s.$$
Since $G'$ is a tree, the sum of its degrees also equals $2(k+s)-2$, hence 
$2(k+s)-2\ge k+3s$, and $s\le k-2$, proving the theorem.
\ifprocs\qed\fi
\end{proof}

This bound is exactly tight. We sketch the proof of the following theorem.
\begin{theorem} \label{thm:lbtree}
For every $i\in \mathbb{N}$ there exists a tree $G$ and $k=2^i$ terminals $R\subseteq V$ such that every distance-preserving minor $G'$ of $G$ with respect to $R$ has $|V'|\ge 2k-2$.
In particular, $f^*(k, \trees)\geq 2k-2$ for $k=2^i$.
\end{theorem}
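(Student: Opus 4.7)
The plan is to realize the lower bound with the complete binary tree. Take $G$ to be the complete binary tree of depth $i$, designate its $k=2^i$ leaves as the terminal set $R$, and give every edge unit length. I will argue that any distance-preserving minor $G'$ of $G$ satisfies $|V(G')|\ge 2k-2$.

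Since $G$ is a tree, any minor is a forest, and $G'$ must be connected to realize the (finite) terminal distances, so $G'$ is a tree. By repeatedly contracting a degree-$2$ non-terminal (and summing the two incident edge lengths) and deleting any non-terminal of degree $\le 1$, I pass to a smaller distance-preserving minor $G''$ with $|V(G'')|\le |V(G')|$ in which every non-terminal has degree at least $3$. It therefore suffices to show $|V(G'')|\ge 2k-2$. Recall that each vertex $v$ of $G''$ corresponds to a connected subgraph $S_v\subseteq V(G)$, and the degree of $v$ in $G''$ equals the number of components of $V(G)\setminus S_v$.

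First I would prove that every terminal has degree exactly $1$ in $G''$. If some terminal $t$ had degree $\ge 2$, two other terminals $t_a,t_b$ would lie in different components of $V(G)\setminus S_t$, and the unique $t_a$-to-$t_b$ path in $G''$ would pass through $t$. Distance preservation then forces $d_G(t_a,t_b)=d_G(t_a,t)+d_G(t,t_b)$, placing the leaf $t$ on the $t_a$-to-$t_b$ path in $G$, which is impossible since $t$ is a leaf distinct from $t_a,t_b$ and edge lengths are positive. Next I would show that no non-terminal of $G''$ has degree $\ge 4$. Suppose $v$ were such a non-terminal; each of its branches must contain a terminal, otherwise the branch would consist only of non-terminals and would necessarily possess a pendant non-terminal, which the reduction removed. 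Pick terminals $a,b,c,e$ in four distinct branches of $v$. In $G''$, every pairwise shortest path among these four passes through $v$, so the three quartet sums
\[
 d_{G''}(a,b)+d_{G''}(c,e),\quad d_{G''}(a,c)+d_{G''}(b,e),\quad d_{G''}(a,e)+d_{G''}(b,c)
\]
all equal $d_{G''}(a,v)+d_{G''}(b,v)+d_{G''}(c,v)+d_{G''}(e,v)$. By distance preservation the same three sums coincide in $G$. However, a direct calculation using the depths of pairwise least common ancestors in the complete binary tree shows that for any four of its leaves the four-point condition is strictly resolved: one of the three sums is strictly smaller than the other two (which are equal). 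This contradicts the equality of all three sums.

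Combining the two steps, $G''$ is a tree with $k$ terminals of degree $1$ and $s$ non-terminals of degree exactly $3$, so the handshake identity $2(k+s-1)=k+3s$ gives $s=k-2$ and hence $|V(G')|\ge |V(G'')|=2k-2$. The main obstacle is the degree-$\le 3$ claim for non-terminals: it rests on the non-degeneracy of the induced tree metric on any four leaves of a binary tree, together with the side condition that each branch of a would-be degree-$\ge 4$ non-terminal contains a terminal (which is secured by the pendant-free reduction).
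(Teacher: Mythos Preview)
Your proof is correct and takes a genuinely different route from the paper's. The paper argues by induction on the depth $i$: it observes that any distance-preserving minor of the depth-$(i{+}1)$ tree $T$ must treat the two depth-$i$ subtrees $T_1,T_2$ independently, applies the inductive hypothesis to each, and then checks directly that the only remaining permissible contraction is the edge from the root of $T$ to one of its children---so the minimal minor is essentially unique, with $2k-2$ vertices. Your argument is instead metric and degree-theoretic: after the pendant/degree-two cleanup you force every terminal to be a leaf (a leaf of $G$ cannot sit on the $G$-path between two other leaves) and every non-terminal to have degree exactly $3$ (since the four-point condition is \emph{strict} for any four leaves of a binary tree---no vertex of $G$ has degree $\ge 4$, so no single vertex can separate four leaves pairwise), and then the handshake identity $2(k+s-1)=k+3s$ gives $s=k-2$. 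The paper's approach yields more structural information (near-uniqueness of the optimal minor), while yours is more portable: it works verbatim for any tree whose internal vertices all have degree at most $3$ and whose terminals are exactly the leaves. One small slip worth cleaning up: the assertion that $\deg_{G''}(v)$ equals the number of components of $V(G)\setminus S_v$ is only an inequality in general (deleted branch sets can make it strict), but you never actually use it---your branch arguments go through when phrased in terms of components of $G''\setminus\{v\}$.
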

\begin{proof}[\proofof{Sketch}]
Consider the complete binary tree $G$ of depth $i$ with unit edge-lengths. Let the $2^i$ leaves of the tree be the terminals $R$. We use induction on $i$ to prove that for the complete binary tree with level $i$, the only edge contraction (and indeed the only minor operation) allowed is the contraction of an edge between the root and one of its children. In the tree with depth 1 this is clearly true. Let $T$ be the complete binary tree with depth $i+1$, and $T_1$, $T_2$ be its two $i$-depth subtrees. Any minor of $T$ does not combine the minors for $T_1$ and $T_2$, since paths between $v, u\in V(T_i)$ are always shorter than paths between $v\in V(T_1)$ and $u\in V(T_2)$. The induction hypothesis therefore rules out edge-contractions not involving the roots of $T_1$ and $T_2$. Pairwise distances between terminals inside and between the trees $T_1$ and $T_2$ dictate that, again, the only possible edge-contraction in $T$ is that of (without loss of generality) the edge $(root(T_1), root(T))$, reducing the number of vertices to $2k-2$.
\ifprocs\qed\fi
\end{proof} 

\section{A Lower Bound of $\Omega(k^2)$}

In this section we prove Theorem \ref{thm:main} using an even stronger assertion: there exist planar graphs $G$ such that every distance-preserving \emph{planar graph} $H$ (a planar graph with $R\subseteq V(H)$ that preserves terminal distances) has $|V(H)|\ge \Omega(k^2)$. Since any minor $G'$ of $G$ is planar, Theorem \ref{thm:main} follows.

Our proof uses a $k\times k$ grid graph with $k$ terminals, whose edge-lengths are
chosen so that terminal distances are essentially ``linearly independent'' of one another. 
We use this independence to prove that no distance-preserving minor $G'$ 
can have a small vertex-separator. 
Since $G'$ is planar, we can apply the planar separator theorem \cite{LT79},
and obtain the desired lower bound.

\begin{theorem} \label{thm:grid}
For every $k\in \mathbb{N}$ there exists a planar graph $G=(V,E,\ell)$ (in particular, the $k\times k$ grid) and $k$ terminals $R\subseteq V$, such that every distance-preserving \emph{planar graph} $G'=(V',E', \ell ')$ has $\Omega(k^2)$ vertices. 
In particular, $f^*(k, \mathrm{Planar})\ge \Omega(k^2)$. 
\end{theorem}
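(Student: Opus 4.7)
The plan is to take $G$ to be the $k\times k$ grid with generic edge-lengths and $k$ terminals on its boundary, and then combine the planar separator theorem with a tropical-rank lower bound on the induced terminal distance matrix to force $|V(G')|\ge \Omega(k^2)$. Concretely, I would set $\ell(e) = 1 + \varepsilon\,w_e$ with $\varepsilon>0$ sufficiently small and the weights $w_e$ algebraically independent over $\mathbb{Q}$, and place the $k$ terminals so they are well-separated on the boundary (for instance, $k/2$ evenly spaced along the top row and $k/2$ along the bottom row). Two features of this setup will matter: the shortest path between every two terminals is unique, and the $\binom{k}{2}$ pairwise terminal distances depend ``independently'' on the $\Theta(k^2)$ edge-weights.

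Suppose for contradiction that $|V(G')|\le c\,k^2$ for a sufficiently small constant $c>0$. I apply the vertex-weighted Lipton--Tarjan planar separator theorem to $G'$, weighting each terminal by $1$ and each non-terminal by $0$, to obtain a separator $S\subseteq V(G')$ of size $|S|\le O(\sqrt{|V(G')|}) = O(\sqrt{c}\,k)$ such that $V(G')\setminus S$ splits into parts $A,B$ containing at most $2k/3$ terminals each. For $c$ small enough this gives $|R_A|,|R_B|\ge \Omega(k)$, where $R_A := R\cap A$ and $R_B := R\cap B$. Since every $R_A$-to-$R_B$ path in $G'$ must traverse $S$,
\begin{equation*}
d_G(t,t') \;=\; d_{G'}(t,t') \;=\; \min_{s\in S}\bigl(d_{G'}(t,s) + d_{G'}(s,t')\bigr)
\qquad \forall\,(t,t')\in R_A\times R_B,
\end{equation*}
so the terminal distance matrix $D\in\R^{R_A\times R_B}$ has tropical (min-plus) rank at most $|S|$.

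The heart of the argument is to show that for our construction $D$ has tropical rank $\Omega(k)$, which would contradict $|S|\le O(\sqrt{c}\,k)$ for small $c$ and yield $|V(G')|\ge \Omega(|S|^2) = \Omega(k^2)$. I expect this to be the main obstacle. My intended approach is a dimension count: the set of $|R_A|\times|R_B|$ matrices of tropical rank at most $r$ is a finite union of piecewise-linear cells of dimension $O(r k)$, whereas the parameterization $w\mapsto D(w)$ sweeps out a set of dimension $\Omega(k^2)$ (using the $\Theta(k^2)$ edge-weights, and the uniqueness of shortest paths to bound the Jacobian rank from below). Generic $w_e$ then push $D$ outside every low-tropical-rank locus. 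A more combinatorial alternative would be to exhibit an explicit $\Omega(k)\times\Omega(k)$ submatrix of $D$ whose every $2\times 2$ minor is nonzero (an ``anti-Monge'' condition), which would directly force tropical rank $\Omega(k)$ and close the argument.
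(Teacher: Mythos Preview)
Your overall strategy---apply a weighted planar separator to $G'$, then argue that too few separator vertices cannot ``support'' all $\Theta(k^2)$ cross shortest paths---is exactly the paper's. The paper's Lemma (each separator vertex $v$ lies on fewer than $|Q_1|+|Q_2|$ of the $Q_1$--$Q_2$ shortest paths) is precisely the statement that each tropical rank-one summand in your decomposition $D = \min_{s\in S}\bigl(u_s \oplus v_s^{\mathsf T}\bigr)$ realizes the minimum on a forest's worth of entries, so the Barvinok rank of $D$ is at least $\tfrac{|Q_1|\,|Q_2|}{|Q_1|+|Q_2|} = \Omega(k)$. So your tropical-rank reformulation is a clean repackaging of the same idea, and the whole proof hinges on that rank lower bound.

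The gap is in how you propose to prove it. Your ``anti-Monge'' route does not give what you claim. Nonvanishing of every $2\times 2$ tropical minor only forbids $4$-cycles in each bipartite support graph $H_s$ (the set of $(t,t')$ where $s$ attains the minimum); it does not forbid longer even cycles. A $K_{2,2}$-free bipartite graph on $\Theta(k)+\Theta(k)$ vertices can have $\Theta(k^{3/2})$ edges (K\H{o}v\'ari--S\'os--Tur\'an is tight), so you would only get $|S|\ge \Omega(\sqrt{k})$ and hence $|V(G')|\ge \Omega(k)$, which is trivial. To reach $|S|\ge \Omega(k)$ you must show each $H_s$ is genuinely sparse, e.g.\ a forest; that requires excluding \emph{all} even cycles, which is a statement about alternating sums $\sum_i D(y_i,x_i)=\sum_i D(y_{i+1},x_i)$ of arbitrary length, not just $2\times 2$ identities. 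Your dimension-count route is more promising, but ``uniqueness of shortest paths'' does not by itself bound the Jacobian rank of $w\mapsto D(w)$ from below: you need the $\Theta(k^2)$ path-indicator vectors in $\{0,1\}^{E(G)}$ to span a space of dimension $\Omega(k^2)$, and with top/bottom terminal placement these paths overlap heavily, so this linear-independence claim needs an actual argument.

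The paper sidesteps both difficulties by choosing explicit (non-generic) edge-lengths: horizontal edges of length $1$ and vertical edges in column $x$ of length $1+\tfrac{1}{k\,2^{x^2}}$, with terminals $R_1=\{(0,y):0\le y<k/2\}$ and $R_2=\{(x,x):k/2\le x<k\}$. This yields a closed form $d_G((0,y),(x,x)) = 2x-y+\tfrac{x-y}{k\,2^{x^2}}$, and then any putative $2s$-cycle identity in $H_v$ reduces to $\sum_i y_i/2^{x_i^2} = \sum_i y_{i+1}/2^{x_i^2}$; the super-geometric decay in $x$ kills this for all $s$, so every $H_v$ is a forest. A secondary point you skate over (but the paper handles): after separating, you must pigeonhole to get $\Omega(k)$ terminals from $R_1$ on one side and $\Omega(k)$ from $R_2$ on the other---otherwise $D$ could involve only same-row distances, which genuinely have low tropical rank.
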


\begin{proof}
For simplicity we shall assume that $k$ is even. Consider a grid graph $G$ of size $k\times k$ with vertices $(x,y)$ for $x,y\in [0,k-1]$. Let the length function $\ell$ be such that the length of all horizontal edges $((x,y),(x+1,y))$ is 1, and the length of each vertical edge $((x,y),(x,y+1))$ is $1+\frac{1}{2^{x^2} \cdot k}$.
Let $R_1=\{(0,y):\space y\in [0, \frac{k}{2}-1]\}$, and $R_2=\{(x, x):\space x\in [\frac{k}{2},k-1]\}$. Let the terminals in the graph be $R=R_1\cup R_2$, so $|R|=k$. See Figure \ref{fig:grid} for illustration.

\begin{figure}[h]
\centering
\includegraphics[width=4 in]{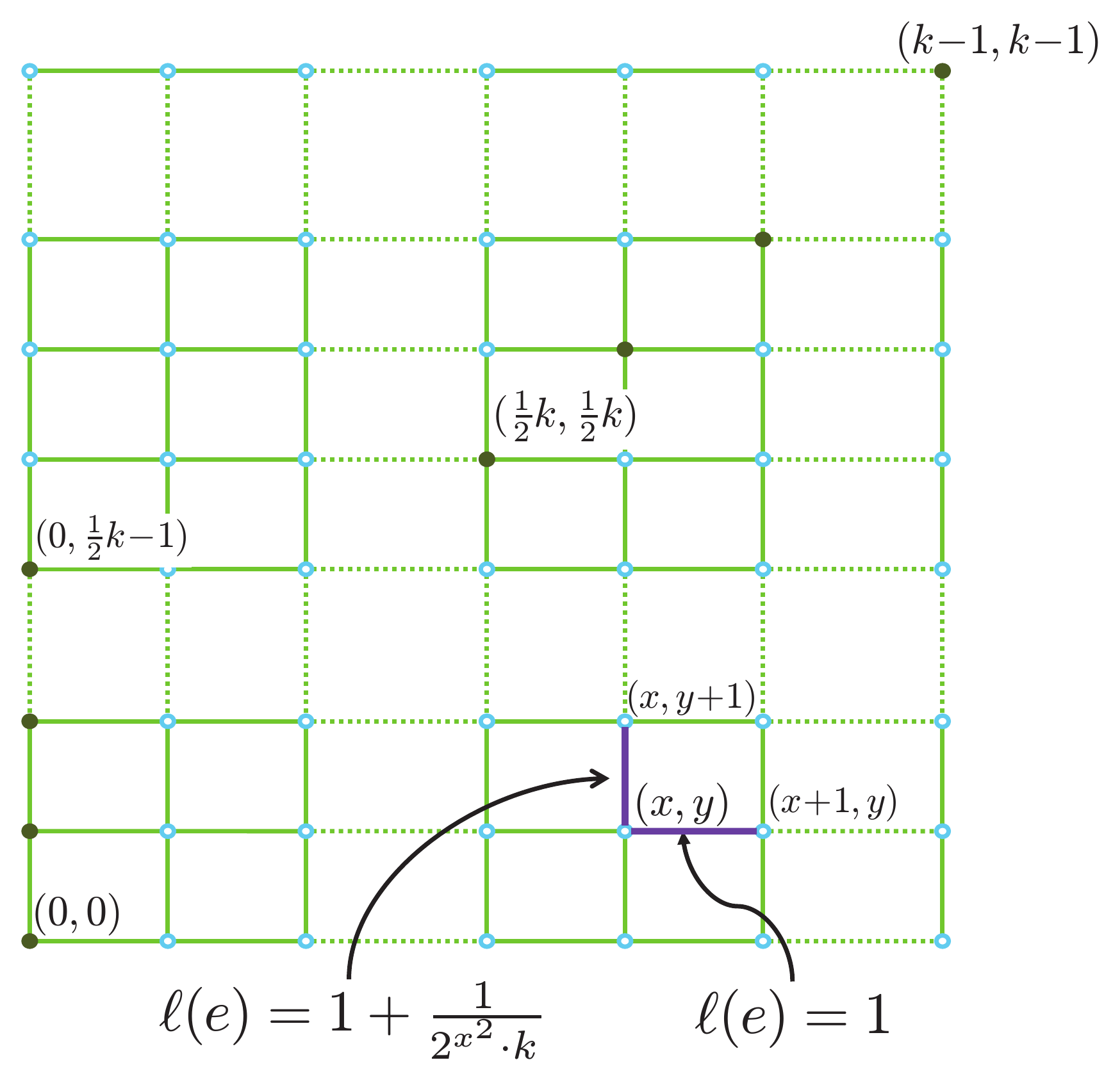}
\caption{A grid graph $G$ and terminals $R$.}
\label{fig:grid}
\end{figure}

It is easy to see that the shortest-path between a vertex $(0,y)\in R_1$ and a vertex $(x,x)\in R_2$ includes exactly $x$ horizontal edges and $x-y$ vertical edges. Indeed, such paths have length smaller than $x+(x-y)(1+\frac{1}{k})\le 2x-y+1$. Any other path between these vertices will have length greater than $2x-y+2$. Furthermore, the shortest path with $x$ horizontal edges and $x-y$ vertical edges starting at vertex $(0,y)$ makes horizontal steps before vertical steps, since the vertical edge-lengths decrease as $x$ increases, hence
\begin{equation}\label{eqn:distance}
d_G((0,y),(x,x)) = 2x-y+\frac{x-y}{2^{x^2} \cdot k}.
\end{equation}

Assume towards contradiction that there exists a planar graph $G'$ with less than $\frac{k^2}{1600}$ vertices that preserves terminal distances exactly. Since $G'$ is planar, by the weighted version of the planar separator theorem by Lipton and Tarjan \cite{LT79} with vertex-weight $1$ on terminals and 0 on non-terminals, there exists a partitioning of $V'$ into three sets $A_1$, $S$, and $A_2$ such that $w(S)\le |S|\le 2.5\cdot \sqrt{\frac{k^2}{1600}}< \frac{3k}{40}$, each of $A_1$ and $A_2$ has at most $\frac{2k}{3}$ terminals, and there are no edges going between $A_1$ and $A_2$. Hence, for $i\in \{1,2\}$ it holds that $w(A_i\cup S)\ge k/3$ and $w(A_i)\ge \frac{k}{3}-\frac{3k}{40}>\frac{k}{4}$.

Without loss of generality, we claim that $A_1\cap R_1$ and $A_2\cap R_2$ each have $\Theta(k)$ terminals. To see this, suppose without loss of generality that $A_1$ is the heavier of the two sets (i.e. $w(A_1)\ge \frac{k}{2}-\frac{3k}{40}$ and $\frac{k}{4} \le w(A_2)\le \frac{k}{2})$. Suppose also that $w(A_2\cap R_2)\ge w(A_2\cap R_1)$. Then $w(A_2\cap R_2)\ge \frac{k}{8}$, and $w(A_2\cap R_1)\le \frac{1}{2}\cdot w(A_2)\le \frac{k}{4}$, implying that $w(A_1\cap R_1)\ge w(R_1)-(w(R_1\cap A_2)+w(R_1\cap S))\ge \frac{k}{2}-(\frac{k}{4}+\frac{3k}{40}) = \frac{k}{5}$. In conclusion, without loss of generality it holds that $w(A_1\cap R_1)\ge \frac{k}{5}$ and $w(A_2\cap R_2)\ge \frac{k}{8}$. Let $Q_1\subseteq A_1\cap R_1$ and $Q_2\subseteq A_2\cap R_2$ be two sets with the exact sizes $\frac{k}{5}$ and $\frac{k}{8}$.

Every path between a terminal in $Q_1$ and a terminal in $Q_2$ goes through at least one vertex of the separator $S$. Overall, the vertices in the separator participate in $\frac{k}{8}\times \frac{k}{5}$ paths between $Q_1$ and $Q_2$. See Figure \ref{fig:sep} for illustration.

\begin{figure}[h]
\centering
\includegraphics[width=0.5\linewidth]{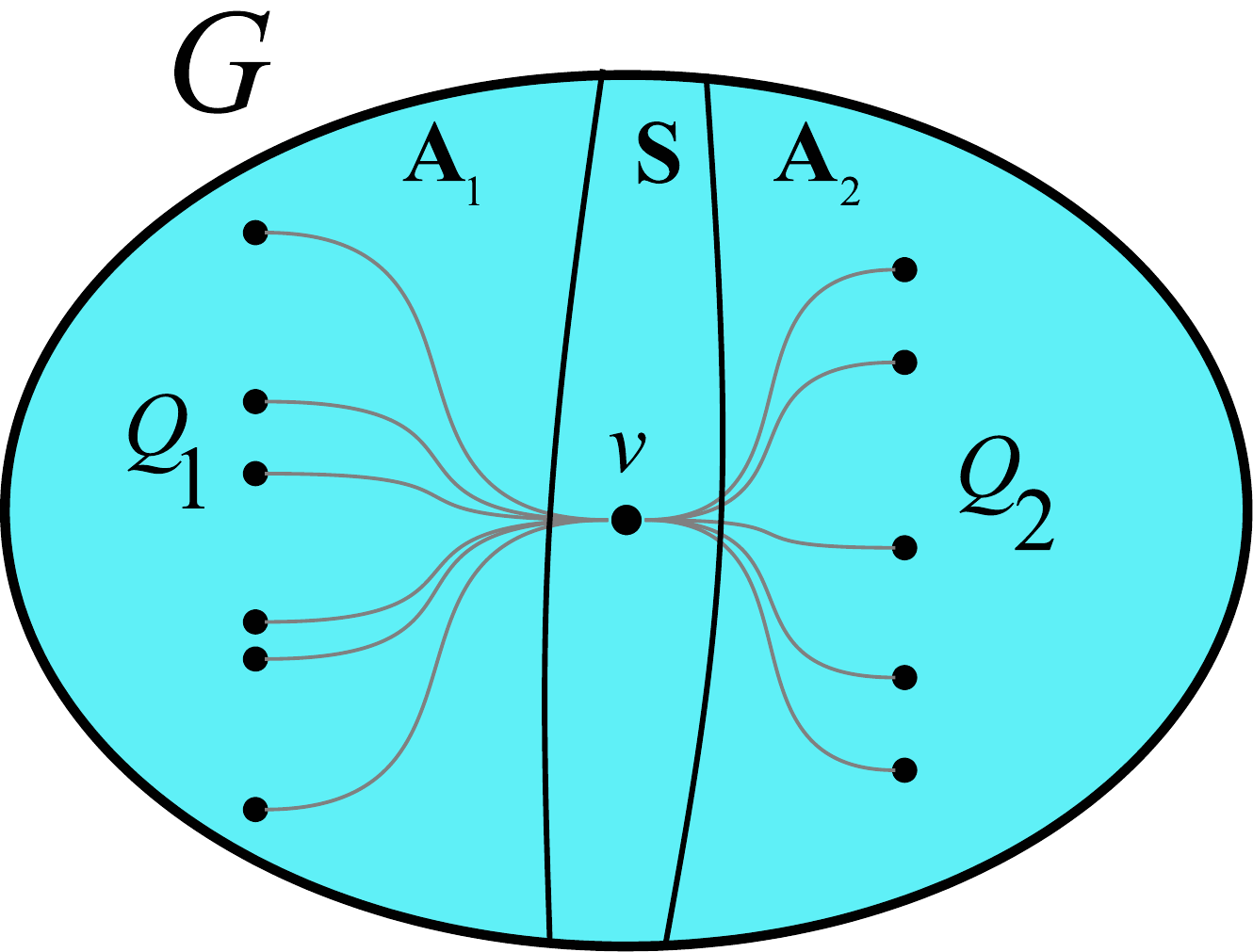}
\caption{Terminals on different sides connected by
paths going through $v\in S$.}
\label{fig:sep}
\end{figure}

We will need the following lemma, which is proved below.
\begin{lemma} \label{lem:thruv}
Let $G'$, $S$, $Q_1$ and $Q_2$ be as described above. Then every vertex $v\in S$ participates in at most $|Q_1|+|Q_2|=\frac{k}{5}+\frac{k}{8}$ shortest paths between $Q_1$ and $Q_2$.
\end{lemma}

Applying Lemma \ref{lem:thruv} to every vertex in $S$, at most $\frac{3k}{40}\cdot \frac{13k}{40} = \frac{39 k^2}{1600}<\frac{k^2}{40}$ shortest paths between $Q_1$ and $Q_2$ go through $S$, which contradicts the fact that all $\frac{k}{8}\cdot \frac{k}{5} = \frac{k^2}{40}$ shortest-paths between $Q_1$ and $Q_2$ in $G'$ go through the separator, and proves Theorem \ref{thm:grid}.  
\ifprocs\qed\fi
\end{proof}

\begin{proof}[\proofof{of Lemma \ref{lem:thruv}}]
Define a bipartite graph $H$ on the sets $Q_1$ and $Q_2$, with an edge between $(0,y)\in Q_1$ and $(x,x)\in Q_2$ whenever a shortest path in $G'$ between $(0,y)$ and $(x,x)$ uses the vertex $v$. We shall show that $H$ does not contain an even-length cycle. Since $H$ is bipartite, it contains no odd-length cycles either, making $H$ a forest with $|E(H)|<|Q_1|+|Q_2| = \frac{k}{5}+\frac{k}{8}$, thereby proving the lemma.

Let us consider a potential $2s$-length (simple) cycle in $H$ on the vertices $(0,y_1)$, $(x_1, x_1)$, $(0, y_2)$, $(x_2, x_2)$, ..., $(0,y_s)$, $(x_s, x_s)$ (in that order), for particular $(0, y_i)\in Q_1$ and $(x_i, x_i)\in Q_2$. 
Every edge $((0,y),(x,x))\in E(H)$ represents a shortest path in $G'$ that uses $v$, thus 
\begin{equation} \label{eqn:splitatv} d_G((0,y),(x, x)) = d_{G'}((0, y),v)+ d_{G'}(v,(x, x)).\end{equation}
If the above cycle exists in $H$, then the following equalities hold (by convention, let $y_{s+1} = y_1$). Essentially, we get that the sum of distances corresponding to ``odd-numbered" edges in the cycle equals the one corresponding to ``even-numbered" edges in the cycle.
\begin{eqnarray*} 
\sum_{i=1}^{s} d_{G}((0, y_i), (x_i, x_i)) &\overset{\text{(\ref{eqn:splitatv})}}{=}&
 \sum_{i=1}^{s} d_{G'}((0, y_i), v) + \sum_{i=1}^{s} d_{G'}(v, (x_i, x_i)) \\
&=& \sum_{i=1}^{s} d_{G'}(v, (0, y_{i+1}))+\sum_{i=1}^{s} d_{G'}((x_i, x_i), v) \\
&\overset{\text{(\ref{eqn:splitatv})}}{=}&
\sum_{i=1}^{s} d_{G}((x_i, x_i), (0, y_{i+1})).
\end{eqnarray*}

Plugging in the distances as described in (\ref{eqn:distance}) and simplifying, we obtain
\begin{equation*}
\sum_{i=1}^{s} (2x_i-y_i+(x_i-y_i)\cdot \frac{1}{2^{x_i^2} \cdot k}) = \sum_{i=1}^{s} (2x_i-y_{i+1}+ (x_i-y_{i+1})\cdot\frac{1}{2^{x_i^2} \cdot k}),
\end{equation*}
or equivalently, 
\begin{equation*}
\sum_{i=1}^{s}  \frac{y_i}{2^{x_i^2}}= \sum_{i=1}^{s}  \frac{y_{i+1}}{2^{x_i^2}}
\end{equation*}

Suppose without loss of generality that $x_1=\min\{x_i: i\in [1,s] \}$ (otherwise we can rotate the notations along the cycle), and that $y_1>y_2$ (otherwise we can change the orientation of the cycle). Then we obtain
$$\frac{y_1-y_2}{2^{x_1^2}} = \sum_{i=2}^{s}  \frac{y_{i+1}-y_i}{2^{x_i^2}}.$$
However, since $y_1> y_2$, the lefthand side is at least $\frac{1}{2^{x_1^2}}$, whereas the righthand side is
$\sum_{i=2}^{s} \frac{y_{i+1}-y_i}{2^{x_i^2}} \le (s-1)\cdot \frac{k}{2^{(x_1+1)^2}}\le \frac{k^2}{2^{(x_1+1)^2}}$. Therefore it must hold that $2^{2x_1+1}\le k^2.$
Since $x_1 \geq \frac{k}{2}$, this inequality does not hold. Hence, for all $s$, no cycle of size $2s$ exists in $H$, completing the proof of Lemma \ref{lem:thruv}. 

\ifprocs\qed\fi
\end{proof}

\section{$\Theta(k)$ Bounds for Constant Treewidth Graphs} \label{sec:tw}
In this section we prove Theorem \ref{thm:tw}, 
which bounds $f^*(k, \tw(p))$.
The upper and the lower bound are proved 
separately in Theorems \ref{thm:twupper} and \ref{thm:omegakh} below. 

\subsection{An Upper Bound of $O(p^3k)$}

\begin{theorem} \label{thm:twupper}
Every graph $G=(V,E,\ell)$ with treewidth $p$ and a set $R\subseteq V$ of $k$ terminals admits a distance-preserving minor $G'=(V', E', \ell ')$ with $|V'|\le O(p^3k)$.
In other words, $f^*(k, \tw (p))\le O(p^3k)$.
\end{theorem}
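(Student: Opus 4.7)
My plan is to apply the naive algorithm $\naive$ to $G$ and then bound the size of its output $G'$ using a charging argument against a tree decomposition of $G$. First I would fix a tree decomposition $(T,\{B_t\}_{t\in V(T)})$ of $G$ of width $p$ (so $|B_t|\le p+1$), preprocessed to be nice and rooted, and I would take $G' = \naive(G, R)$. As already established for the naive algorithm, $G'$ is a distance-preserving minor, and moreover every non-terminal $v\in V(G')\setminus R$ has $\deg_{G'}(v)\ge 3$; in particular, each such $v$ is a branching vertex on at least two shortest paths between pairs of terminals in $G$.

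The heart of the proof is to assign every non-terminal $v\in V(G')\setminus R$ to a single bag $\beta(v)$ of $T$ containing $v$ (for instance, the bag closest to the root in the connected subtree $\{t : v\in B_t\}$). Then I would establish the desired $O(p^3k)$ bound by proving two structural claims: (i) at most $O(pk)$ bags $B_t$ receive any non-terminal under $\beta$, and (ii) each bag receives at most $O(p^2)$ non-terminals. Claim (ii) is where the treewidth really enters: any non-terminal $v\in B_t$ that survives $\naive$ serves as a branching point between shortest paths, and each such path must enter and leave the subtree rooted at $t$ through the $\le p+1$ portals of $B_t$; since the branching pattern at $v$ is determined by an ordered pair of such portals (the entry of the shared prefix and the first point of divergence), a counting argument in the spirit of the branching lemma used for $f^*(k)\le O(k^4)$ bounds the number of non-terminals in $B_t$ by $O((p+1)^2) = O(p^2)$.

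For claim (i), I would prune $T$ to its ``Steiner skeleton'' with respect to $R$: remove bags that lie in a subtree containing no terminal, and suppress degree-two interior bags of $T$ just as $\naive$ suppresses degree-two vertices of $G$. In the resulting tree, each branching node or leaf is anchored to at least one terminal in its bag (or a constant number of them), so the number of nodes of the pruned tree is $O(k)$. Since each vertex of $G$ occupies a subtree of $T$ that intersects at most one bag at each level, and such subtrees have length at most $O(pk)$ in total by the standard ``each vertex is in at most $p+1$ bags along the relevant path'' accounting for width-$p$ decompositions, I would derive that only $O(pk)$ bags along the skeleton are charged under $\beta$. Combining (i) and (ii) with the $k$ terminals yields $|V(G')|\le k + O(pk)\cdot O(p^2) = O(p^3k)$.

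The main obstacle will be making claim (i) fully rigorous: a priori a non-terminal may be shared by many bags, and branching points in $G$ need not correspond cleanly to branching nodes of $T$, so I expect to need an auxiliary lemma that localizes each branching non-terminal to a specific ``responsibility region'' in the pruned tree decomposition and charges it only once. Claim (ii) should then follow from a relatively clean case analysis of how shortest paths can intersect a bag of size $p+1$, closely mirroring the two-path branching lemma used earlier but applied within the bag rather than globally.
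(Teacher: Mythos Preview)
Your approach differs substantially from the paper's. The paper does not analyze $\naive$ directly via a charging argument on a fixed tree decomposition; instead it designs a recursive procedure $\redtw$ that at each node applies two balanced separators (one balancing the terminals, a second balancing the accumulated boundary vertices), proves the boundary never exceeds $6p$, shows the recursion tree has $O(k/p)$ leaves, and invokes $\naive$ only at the leaves on instances with $O(p)$ terminals, giving $O(p^4)$ vertices per leaf and $O(k/p)\cdot O(p^4)=O(p^3k)$ in total. Only afterward does it remark, without a separate argument, that $\naive$ on the original input attains the same bound.

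Your proposal has an internal inconsistency and a genuine gap. The inconsistency: claim~(ii) is trivial as you have set it up. Since you require $\beta(v)$ to be a bag \emph{containing} $v$ and $|B_t|\le p+1$, at most $p+1$ non-terminals can ever be assigned to a single bag; the portal/branching discussion does no work, and the per-bag bound is $O(p)$, not $O(p^2)$. Thus if claim~(i) held, you would obtain $O(pk)\cdot O(p)=O(p^2k)$, strictly better than what you set out to prove---a sign that the two claims are not calibrated and that claim~(i) is carrying all the content.

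The gap: claim~(i) is not substantiated, and the sketch for it is incorrect. Pruning $T$ to a Steiner skeleton on the terminals yields a subtree with $O(k)$ leaves and branch nodes, but the suppressed degree-$2$ paths between them can be arbitrarily long, and the top bag $\beta(v)$ of a surviving non-terminal can land anywhere along them. The statement that ``each vertex is in at most $p+1$ bags along the relevant path'' is not a standard tree-decomposition fact---a vertex's bag-subtree can have unbounded length---so nothing in your outline forces the images of $\beta$ to concentrate in $O(pk)$ bags. This is precisely where the paper injects real structure (the double-separator step and the $6p$ boundary invariant), and your plan has nothing playing that role; the promised ``auxiliary lemma'' would have to do essentially the entire job of the theorem.
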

The graph $G'$ can in fact be computed in time polynomial in $|V|$ (see Remark \ref{rem:trivial}).

Without loss of generality, we may assume that $k\ge p$, since otherwise the $O(k^4)$ bound from Theorem \ref{thm:k4} applies. To prove Theorem \ref{thm:twupper} we introduce the algorithm \redtw (depicted in Algorithm \ref{alg:twcons} below), which follows a divide-and-conquer approach.
We use the small separators guaranteed by the treewidth $p$, 
to break the graph recursively until we have small, almost-disjoint subgraphs. 
We apply the naive algorithm 
(\naive, depicted in Algorithm \ref{alg:naive} in Section \ref{sec:trivial}) on each of these subgraphs with an altered set of terminals -- the original terminals in the subgraph, plus the separator (\emph{boundary}) vertices which disconnect these terminals from the rest of the graph.
we get many small distance-preserving minors, which are then combined into 
a distance-preserving minor $G'$ of the original graph $G$.

\begin{proof} [\proofof{of Theorem \ref{thm:twupper}}]
The divide-and-conquer technique works as follows. Given a partitioning of $V$ into the sets $A_1$, $S$ and $A_2$, such that removing $S$ disconnects $A_1$ from $A_2$, the graph $G$ is divided into the two subgraphs $G[A_i\cup S]$ (the subgraph of $G$ induced on $A_i\cup S$) for $i\in \{1,2\}$. 
For each $G[A_i\cup S]$, we compute a distance-preserving minor 
with respect to terminals set $(R\cap A_i)\cup S$,
and denote it $\hat G_i = (\hat V_i, \hat E_i, \hat \ell _i)$. 
The two minors are then combined into a distance-preserving minor of $G$ with respect to $R$, according to the following definition. 

We define the \emph{union} $H_1\cup H_2$ of two (not necessarily disjoint) 
graphs $H_1=(V_1, E_1, \ell_1)$ and $H_2=(V_2, E_2, \ell_2)$ to be the graph $H=(V_1\cup V_2,E_1\cup E_2,\ell)$ where the edge lengths are $\ell(e)=\min \{\ell_1(e), \ell_2(e)\}$ 
(assuming infinite length when $\ell_i(e)$ is undefined). 
A crucial point here is that $H_1,H_2$ need not be disjoint --
overlapping vertices are merged into one vertex in $H$,
and overlapping edges are merged into a single edge in $H$.

\begin{lemma} \label{lem:divncon}
The graph $\hat G=\hat G_1\cup \hat G_2$ is a distance-preserving minor of $G$ with respect to $R$.
\end{lemma}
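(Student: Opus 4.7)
The plan is to verify the two defining conditions of a distance-preserving minor separately. For the minor property, I would combine the branch-set certificates $\phi_1,\phi_2$ of $\hat G_1,\hat G_2$ into a single map $\phi$ on $V(\hat G)$: set $\phi(w)=\phi_j(w)$ whenever $w\in \hat V_j\setminus S$, and $\phi(s)=\phi_1(s)\cup\phi_2(s)$ for $s\in S$. Since each $s\in S$ is a terminal in both sub-instances, both $\phi_1(s)$ and $\phi_2(s)$ contain $s$, so their union is connected in $G$; since the non-$S$ branch sets on side $j$ are contained in $A_j$, the branch sets on opposite sides cannot overlap. Every edge of $\hat G$ comes from some $\hat G_j$ and is therefore witnessed by an edge of $G[A_j\cup S]\subseteq G$ between the relevant $\phi_j$-sets, which sit inside the combined $\phi$-sets, completing the minor certificate.

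For the distance property I would prove $d_{\hat G}(u,v)=d_G(u,v)$ for all $u,v\in R$ using a shortest-path decomposition at $S$-vertices, in both directions. For the upper bound, take a shortest $u$-$v$ path in $G$ and cut it at every $S$-vertex it visits. Because $S$ separates $A_1$ from $A_2$, each resulting piece lies inside some $G[A_j\cup S]$ with endpoints in $\{u,v\}\cup S$, hence in $(R\cap A_j)\cup S$, which are terminals of $\hat G_j$; replacing the piece by a path in $\hat G_j\subseteq \hat G$ of length $d_{\hat G_j}(\cdot,\cdot)=d_{G[A_j\cup S]}(\cdot,\cdot)$ (no larger than the piece's length) and concatenating yields the bound. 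For the lower bound, observe dually that any edge of $\hat G$ lies in some $\hat G_j$ and therefore joins two vertices of $\hat V_j$, so a path in $\hat G$ can only switch sides at $S$-vertices; splitting a shortest $\hat G$-path there produces $\hat G_j$-segments between terminals of $\hat G_j$, each of length at least $d_{\hat G_j}(\cdot,\cdot)=d_{G[A_j\cup S]}(\cdot,\cdot)\ge d_G(\cdot,\cdot)$, and the triangle inequality in $G$ finishes the argument.

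The main delicacy will be the minor step, specifically checking that the two branch-set mappings glue consistently into a valid minor of the whole $G$ rather than of the disjoint union of the two subgraphs. This is driven entirely by the observation that each $s\in S$ is a terminal (hence survives the minor operations) in both sub-instances, which simultaneously makes $\phi(s)=\phi_1(s)\cup\phi_2(s)$ well-defined and connected, and provides matching split points for the distance decomposition on the two sides.
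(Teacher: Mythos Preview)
Your proof is correct and follows essentially the same approach as the paper: both arguments hinge on the fact that every $s\in S$ is a terminal in each sub-instance (hence survives and lets the two minors glue at $S$) for the minor property, and on decomposing shortest paths at separator vertices for the distance property. Your treatment via explicit branch-set certificates and a two-sided distance inequality is more rigorous than the paper's informal claim that the two sequences of minor operations ``can be performed independently,'' but the underlying idea is identical.
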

\begin{proof} [\proofof{of Lemma \ref{lem:divncon}}]
Note that since the \emph{boundary vertices} in $S$ exist in both $\hat G_1$ and $\hat G_2$, they are never contracted into other vertices. In fact, the only  minor-operation allowed on vertices in $S$ is the removal of edges $(s_1, s_2)$ for two vertices $s_1, s_2\in S$, when shorter paths in $G[A_1\cup S]$ or $G[A_2\cup S]$ are found. It is thus possible to perform both sequences of minor-operations independently, making $\hat G$ a minor of $G$. 

A path between two vertices $t_1, t_2\in R$ can be split into subpaths at every visit to a vertex in $R\cup S$, so that each subpath between $v,u\in R\cup S$ does not contain any other vertices in $R\cup S$. Since there are no edges between $A_1$ and $A_2$, each of these subpaths exists completely inside $G[A_1\cup S]$ or $G[A_2\cup S]$. Hence, for every subpath between $v,u\in R\cup S$ it holds that $d_G(v,u) = d_{G[A_i\cup S]}(v,u) = d_{\hat G_i}(v,u)$ for some $i\in \{1,2\}$. Altogether, the shortest path in $G$ is preserved in $\hat G$. It is easy to see that shorter paths will never be created, as these too can be split into subpaths such that the length of each subpath is preserved. Hence, $\hat G$ is a distance-preserving minor of $G$.
\ifprocs\qed\fi
\end{proof}

The graph $G$ has bounded treewidth $p$, hence for every nonnegative vertex-weights $w(\cdot)$, there exists a set $S\subseteq V$ of at most $p+1$ vertices (to simplify the analysis, we assume this number is $p$) whose removal separates the graph into two parts $A_1$ and $A_2$, each with $w(A_i)\le \frac{2}{3} w(V)$.
It is then natural to compute a distance-preserving minor for each part $A_i$
by recursion,
and then combine the two solutions using Lemma \ref{lem:divncon}.
We can use the weights $w(\cdot)$ to obtain a balanced split of the 
terminals, and thus $|R\cap A_i|$ is a constant factor smaller than $|R|$. 
However, when solving each part $A_i$, the boundary vertices $S$ 
must be counted as ``additional'' terminals,
and to prevent those from accumulating too rapidly,
we compute (\ala \cite{Bodlaender89})
a second separator $S^i$ with different weights $w(\cdot)$ 
to obtain a balanced split of the boundary vertices accumulated so far.

Algorithm \redtw receives, in addition to a graph $H$ and a set of terminals $R\subseteq V(H)$, a set of boundary vertices $B\subseteq V(H)$. Note that a terminal that is also on the boundary is counted only in $B$ and not in $R$, so that $R\cap B = \emptyset$.

The procedure $\separator(H, U)$ returns the triple $\langle A_1, S, A_2 \rangle$ of a separator $S$ and two sets $A_1$ and $A_2$ such that $|S|\le p$, no edges between $A_1$ and $A_2$ exist in $G$, and $|A_1\cap U|, |A_2\cap U|\le \frac{2}{3}|U|$,
i.e., using $w(\cdot)$ that is unit-weight inside $U$ and $0$ otherwise.
\algsetup{indent=2em}
\begin{algorithm}                      
\caption{\redtw(graph $H$, required vertices $R$, boundary vertices $B$)}          
\label{alg:twcons}                           
\begin{algorithmic}[1]

\IF {$|R\cup B|\le 18p$}
\RETURN $\naive(H,R\cup B)$\quad (see Algorithm \ref{alg:naive})
\ENDIF
\STATE $\langle A_1,S,A_2 \rangle \gets$ \separator($H, R$)
\FOR {$i=1,2$}
\STATE $\langle A_i^1, S^i, A_i^2\rangle\gets \separator(H[A_i\cup S], (B\cap A_i)\cup S)$
\STATE $R^i\gets R\setminus (S\cup S^i)$
\STATE $B^i\gets B\cup S\cup S^i$
\FOR {$j=1,2$}
\STATE $\hat G_i^j\gets \redtw(H[A_i^j\cup S^i], R^i\cap A_i^j, B^i\cap (A_i^j\cup S^i))$
\ENDFOR
\ENDFOR

\RETURN $(\hat G_1^1\cup \hat G_1^2)\cup (\hat G_2^1\cup \hat G_2^2)$.

\end{algorithmic}
\end{algorithm}

See Figure \ref{fig:algsep} for an illustration of a single execution. Consider the recursion tree $T$ on this process, starting with the invocation of $\redtw(G, R, \emptyset)$. A node $a\in V(T)$ corresponds to an invocation $\redtw(H_a, R_a, B_a)$. The execution either terminates at line 2 (the stop condition), or performs 4 additional invocations $b_i$ for $i\in [1,4]$, each with $|R_{b_i}|\le \frac{2}{3}|R_a|$. As the process continues, the number of terminals in $R_a$ decreases, whereas the number of boundary vertices may increase. We show the following upper bound on the number of boundary vertices $B_a$.

\begin{figure}[htb]
\centering
\includegraphics[scale=0.8]{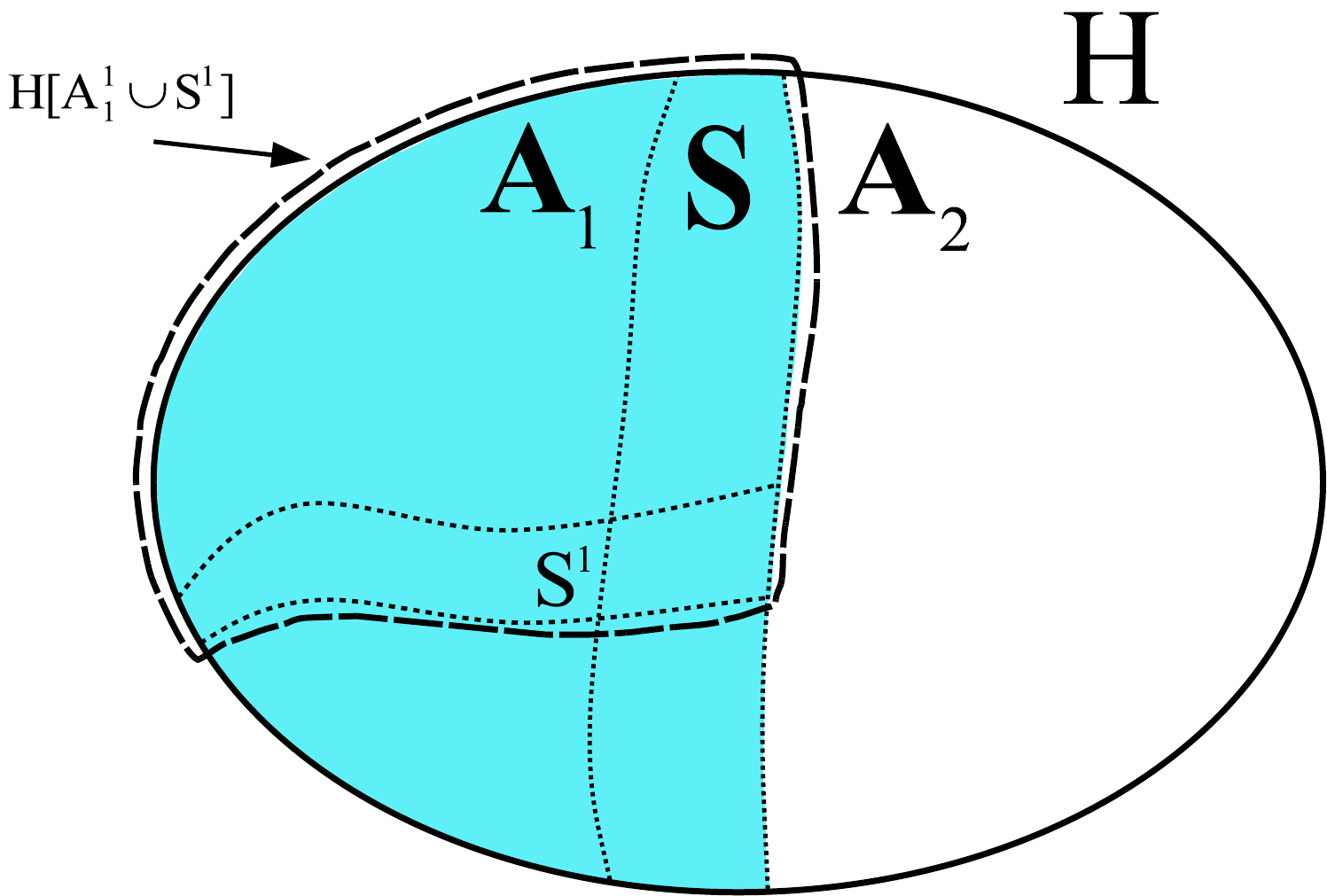} 
\caption{The separators $S$ (from line 3) and $S^1$ (from line 7),
and the subgraph $H[A_1^1\cup S^1]$ to be processed recursively (in line 11).}
\label{fig:algsep}
\end{figure}

\begin{lemma} \label{lem:W6TW}
For every $a\in V(T)$, the number of boundary vertices $|B_a|< 6p$. 
\end{lemma}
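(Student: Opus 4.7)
The plan is to prove the lemma by induction on the depth of $a$ in the recursion tree $T$. The base case is the root, where $|B_{a_0}|=0<6p$. For the inductive step I would fix a node $a$ with $|B_a|<6p$, a child $a'$ of $a$ corresponding to some pair $(i,j)\in\{1,2\}\times\{1,2\}$, and the separators $S,S^i$ chosen at $a$; my goal is then to establish $|B_{a'}|<6p$.

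Next I would unfold the definition: $B_{a'}=(B_a\cup S\cup S^i)\cap(A_i^j\cup S^i)$. Distributing the intersection and using $S^i\cap A_i^j=\emptyset$ (since $A_i^1,S^i,A_i^2$ partition $A_i\cup S$), this simplifies to $B_{a'}=\bigl[(B_a\cup S)\cap A_i^j\bigr]\cup S^i$. The key set-theoretic identity to verify next is
$$(B_a\cup S)\cap A_i^j \;=\; A_i^j\cap\bigl[(B_a\cap A_i)\cup S\bigr],$$
which holds because $A_i^j\subseteq A_i\cup S$ forces every vertex of $B_a$ landing in $A_i^j$ to lie in either $B_a\cap A_i$ or $S$. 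This identity is what lets me invoke the balance property of $S^i$: by construction $\separator$ was called with the weight function concentrated on $(B_a\cap A_i)\cup S$, so $|A_i^j\cap[(B_a\cap A_i)\cup S]|\le\frac{2}{3}|(B_a\cap A_i)\cup S|$.

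Combining these with $|(B_a\cap A_i)\cup S|\le|B_a|+p$ and $|S^i|\le p$ yields the linear recurrence
$$|B_{a'}|\;\le\;\frac{2}{3}|B_a|+\frac{5}{3}p.$$
Plugging in $|B_a|<6p$ gives $|B_{a'}|<\frac{2}{3}(6p)+\frac{5}{3}p=\frac{17}{3}p<6p$, closing the induction.

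The main subtlety I anticipate is the set-manipulation step: $S^i$ was required only to balance $(B_a\cap A_i)\cup S$, not all of $B_a$, so one must argue that no vertex of $B_a\cap A_{3-i}$ can sneak into $A_i^j$ -- and indeed it cannot, because the first separator $S$ already disconnects $A_i$ from $A_{3-i}$, forcing $A_i^j\subseteq A_i\cup S$. The bound $6p$ is not magical: the fixed point of $x\mapsto\frac{2}{3}x+\frac{5}{3}p$ is $5p$, so any constant strictly greater than $5p$ works as the inductive invariant, and $6p$ is simply a clean choice.
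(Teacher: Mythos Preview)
Your proposal is correct and follows essentially the same inductive argument as the paper: both bound the new boundary by $\tfrac{2}{3}\cdot 7p + p = \tfrac{17}{3}p < 6p$, using that $|B_a\cup S|<7p$ and that $S^i$ balances the boundary set. If anything, your set-theoretic bookkeeping is more precise than the paper's own write-up, which sloppily writes ``$|A_i^j|\le \tfrac{2}{3}|B_a\cup S|$'' when it really means the count of boundary vertices landing in $A_i^j$; your explicit identity $(B_a\cup S)\cap A_i^j = A_i^j\cap[(B_a\cap A_i)\cup S]$ makes this step rigorous.
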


\begin{proof}[\proofof{of Lemma \ref{lem:W6TW}}]
Proceed by induction on the depth of the node in the recursion tree.
The lemma clearly holds for the root of the recursion-tree, since initially $B=\emptyset$. Suppose it holds for an execution with values $H_a$, $R_a$, $B_a$. When partitioning $V(H_a)$ into $A_1$, $S$, and $A_2$, the separator $S$ has at most $p$ vertices. From the induction hypothesis, $|B_a|<6p$, making $|B_a\cup S|< 7p$. 

The algorithm constructs another separator, this time separating the boundary vertices $B_a\cup S$. For $i=1,2$ and $j=1,2$ it holds that, $|S^i|\le p$,  $|A_i^j|\le \frac{2}{3}\cdot |B_a\cup S|\le \frac{2}{3}\cdot 7p = \frac{14}{3} p$, and so $|A_i^j\cup S^i|\le \frac{14}{3}p+p < 6p$. The execution corresponding to the node $a$ either terminates in line 2, or invokes executions with the values $A_i^j\cup S^i$ for $i,j=1,2$, hence all new invocations have less than $6p$ boundary vertices.
\ifprocs\qed\fi
\end{proof}

We also prove the following lower bound on the number of terminals $R_a$.

\begin{lemma} \label{lem:fullbin}
Every $a\in V(T)$ is either a leaf of the tree $T$, or it has at least two children, denoted $b_1, b_2$, such that $|R_{b_1}|, |R_{b_2}|\ge p$.
\end{lemma}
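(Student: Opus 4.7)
The plan is to show that when $a$ is not a leaf of $T$, the four children produced by the algorithm at $a$ contain at least two with $\ge p$ terminals. The only reason $a$ could fail to be a leaf is that line 2 did not fire, i.e., $|R_a \cup B_a| > 18p$. Combining this with Lemma \ref{lem:W6TW}, which gives $|B_a| < 6p$, yields $|R_a| > 12p$; the constant $18p$ in the stopping rule is calibrated precisely to force this lower bound.

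I would next unpack the two layers of separators. The outer call $\separator(H_a, R_a)$ produces $\langle A_1, S, A_2\rangle$ that is balanced on $R_a$, so $|R_a \cap A_i| \le \frac{2}{3}|R_a|$. Combined with $|R_a \cap S| \le p$ and the partition identity $|R_a \cap A_1| + |R_a \cap A_2| + |R_a \cap S| = |R_a|$, this gives the lower bound $|R_a \cap A_i| \ge \frac{1}{3}|R_a| - p > 3p$ for each $i \in \{1,2\}$. The inner call $\separator(H[A_i \cup S],\, (B_a \cap A_i)\cup S)$ is balanced on boundary vertices rather than on $R_a$, so it may distribute $R_a \cap A_i$ arbitrarily between $A_i^1$ and $A_i^2$. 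This does not hurt us: since $A_i^1 \cup A_i^2 = (A_i \cup S) \setminus S^i$ and $R^i = R_a \setminus (S \cup S^i)$ is disjoint from both separators, one verifies that $|R^i \cap A_i^1| + |R^i \cap A_i^2| = |R^i \cap A_i| \ge |R_a \cap A_i| - |S^i| > 2p$. By averaging, some $j_i \in \{1,2\}$ satisfies $|R_{b_{i,j_i}}| = |R^i \cap A_i^{j_i}| \ge p$, and taking one such child for each $i \in \{1,2\}$ yields the two distinct children required by the lemma.

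The main subtlety lies in the bookkeeping of set containments: vertices of $S$ may reside inside the inner pieces $A_i^j$ (the inner separator partitions $A_i \cup S$ and is not required to isolate $S$), so the terminals of a child arise from $R_a \cap A_i^j$ only after subtracting both $S$ and $S^i$. Justifying that after these subtractions the two children on side $i$ still contain at least $2p$ terminals in total relies on the disjointness relations $A_i^j \cap S^i = \emptyset$, $A_i \cap S = \emptyset$, and $R^i \cap (S \cup S^i) = \emptyset$. Once these are tracked carefully, the remainder is a short arithmetic argument, and I anticipate no essential difficulty beyond this bookkeeping.
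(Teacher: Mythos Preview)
Your proposal is correct and follows essentially the same approach as the paper's proof: derive $|R_a| > 12p$ from the stopping threshold and Lemma~\ref{lem:W6TW}, show each side $A_i$ retains more than $3p$ terminals after removing $S$, then more than $2p$ after removing $S^i$, and average over the two inner pieces to find one child per side with at least $p$ terminals. Your set-theoretic bookkeeping (tracking where $S$ may land under the inner partition and why $R^i$ avoids it) is slightly more explicit than the paper's, but the argument is the same.
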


\begin{proof}[\proofof{of Lemma \ref{lem:fullbin}}]
Consider a node $a\in V(T)$. If this execution terminates at line 2, $a$ is a leaf and the lemma is true. Otherwise it holds that $|R_a\cup B_a|\ge 18p$. Since Lemma \ref{lem:W6TW} states that $|B_a|\le 6p$ it must holds that $|R_a|\ge 12p$.

When performing the separation of $V(H_a)$ into $A_1$, $S$, and $A_2$, the vertices $R_a$ are distributed between $A_1$, $S$, and $A_2$, such that $|R_a\cap (A_i\cup S)|\ge \frac{1}{3}|R_a|=4p$ for $i=1,2$. Since $|S|\le p$ it must holds that $|(R_a\setminus S) \cap A_i| = |(R_a\cap (A_i\cup S))\setminus S| \ge  3p$. 
When the next separation is performed, at most $p$ of these $3p$ terminals belong to $S^i$, while the remaining terminals belong to $R^i$ and are distributed between $A_i^1$ and $A_i^2$. At least one of these sets, without loss of generality $A_i^1$, gets $|R^i\cap A_i^1|\ge \frac{1}{2}2p = p$. This is a value of $R_{b}$ for a child $b$ of $a$ in the recursion tree. Since this holds for both $A_1$ and $A_2$, at least two invocations $b_1, b_2$ with $|R_{b_i}|\ge p$ are made.
\ifprocs\qed\fi
\end{proof}

The following observation is immediate from Lemma \ref{lem:W6TW}.
\begin{observation} \label{obs:leaf}
Every node $a\in V(T)$ such that $|R_a|<p$ has $|R_a\cup B_a|\le 7p$, thus it is a leaf in $T$.
\end{observation}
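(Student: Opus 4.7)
The plan is to combine the boundary bound from Lemma \ref{lem:W6TW} with the stop condition of Algorithm \ref{alg:twcons}. First, I would recall that Lemma \ref{lem:W6TW} gives $|B_a|<6p$ for every node $a\in V(T)$. Given the hypothesis $|R_a|<p$, the union bound on cardinalities immediately yields
\[
|R_a\cup B_a|\le |R_a|+|B_a|<p+6p=7p,
\]
which is the first part of the observation.

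Next, I would observe that the algorithm \redtw only recurses on a node $a$ when its stop condition at line 1 fails, i.e., only when $|R_a\cup B_a|>18p$. Since we just showed $|R_a\cup B_a|<7p\le 18p$, the invocation at $a$ terminates at line 2 by returning $\naive(H_a,R_a\cup B_a)$, and therefore creates no children in $T$. Hence $a$ is a leaf, completing the observation.

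I do not expect any obstacle here: the argument is a direct combination of Lemma \ref{lem:W6TW} with the termination criterion of the algorithm, and no further case analysis is needed.
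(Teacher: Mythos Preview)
Your proposal is correct and matches the paper's approach exactly: the paper simply states that the observation is immediate from Lemma~\ref{lem:W6TW}, and you have spelled out precisely that immediate argument (union bound on $|R_a|+|B_a|$, then comparison with the stop condition $|R_a\cup B_a|\le 18p$).
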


To bound the size of the overall combined graph $G'$ returned by the first call to $\redtw$, we must bound the number of leaves in $T$. To do that, we first consider the recursion tree $T'$ created by removing those nodes $a$ with $|R_a|<p$; these are leaves from Observation \ref{obs:leaf}. From Lemma \ref{lem:fullbin} every node in this tree (except the root) is either a leaf (with degree 1) or has at least two children (with degree at least 3). Since the average degree in a tree is less than 2, the number of nodes with degree at least 3 is bounded by the number of leaves. Every leaf $b$ in the tree $T'$ has $|R_b|\ge p$. These terminals do not belong to any boundary, so for every other leaf $b'$ in $T'$ it holds that $R_b\cap (R_{b'}\cup B_{b'})=\emptyset$ and these $p$ terminals are unique. There are $k$ terminals in $G$, so there are $O(k/p)$ such leaves, and $O(k/p)$ internal nodes.

From Lemma \ref{lem:fullbin}, invocations are performed only by by internal vertices in $T'$. Each internal vertex has 4 children, hence there are $O(k/p)$ invocations overall. Each leaf in $T$ has $|R_a\cup B_a|\le O(p)$, hence the graph returned from $\naive(H_a)$ is a distance-preserving minor with  $O(p^4)$ vertices (see section \ref{sec:trivial}). Using Lemma \ref{lem:divncon}, the combination of these graphs is a distance-preserving minor $\hat G$ of $G$ with respect to $R$. The minor $\hat G$ has $O(k/p\cdot p^4) = O(k\cdot p^3)$ vertices, proving Theorem \ref{thm:twupper}.
\ifprocs\qed\fi
\end{proof}

\begin{remark} \label{rem:trivial}
Every action (edge or vertex removals, as well as edge contractions) taken by \redtw, is actually performed during a call to \naive, and an equivalent action to it would have been taken had we executed the naive algorithm directly on $G$ with respect to terminals $R$. It follows that the naive algorithm, too, returns distance-preserving minors of size $O(k\cdot p^3)$ to any graph with treewidth $p$. 
(When $p>k$ this statement holds by the $O(k^4)$ bound.) 
\end{remark}

\subsection{A Lower Bound of $\Omega(pk)$}

\begin{theorem} \label{thm:omegakh}
For every $p$ and $k\ge p$ there is a graph $G=(V,E,\ell)$ with treewidth $p$ and $k$ terminals $R\subseteq V$, such that every distance-preserving minor $G'$ of $G$ with respect to $R$ has $|V'|\ge \Omega(k\cdot p)$. 
In other words, $f^*(k, \tw (p))\ge \Omega(pk)$.
\end{theorem}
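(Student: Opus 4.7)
My plan is to derive the lower bound by reducing to the planar lower bound already proved in Theorem~\ref{thm:grid}. The idea is to construct $G$ as a disjoint union of many small ``hard'' gadgets, where each gadget is itself a scaled copy of the planar grid used in Theorem~\ref{thm:grid}. Concretely, I would let $m = \lfloor k/p \rfloor$ and define $G$ to be the disjoint union of $m$ copies of the $p \times p$ weighted grid graph used in the proof of Theorem~\ref{thm:grid}, with each copy carrying exactly $p$ terminals placed as in that construction (so the total terminal count is $mp \le k$; if strict inequality holds, pad with a handful of isolated terminal vertices to reach exactly $k$, which affects neither the treewidth nor the lower bound). Each copy is planar and has treewidth at most $p$, and the disjoint union of graphs of treewidth at most $p$ still has treewidth at most $p$, so $G \in \tw(p)$.

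Next I would argue that the lower bound of Theorem~\ref{thm:grid} applies to each component individually. The key observation is that $d_G(u,v) = \infty$ whenever $u$ and $v$ belong to different components of $G$, and a distance-preserving minor $G'$ must retain these infinite distances. Hence the terminals of distinct components must land in distinct connected components of $G'$, and in fact the sub-minor $G'_i$ of $G'$ induced on the component containing the terminals of the $i$-th copy is a distance-preserving minor of that $i$-th copy. Since a minor of a planar graph is planar, each $G'_i$ is a planar distance-preserving graph for the $p \times p$ grid instance with $p$ terminals.

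Now I would invoke the (scaled) conclusion of Theorem~\ref{thm:grid}: for the $p \times p$ weighted grid with the prescribed $p$ terminals, every distance-preserving planar graph has at least $\Omega(p^2)$ vertices. Summing over the $m$ components yields
\[
|V(G')| \;\ge\; \sum_{i=1}^{m} |V(G'_i)| \;\ge\; m \cdot \Omega(p^2) \;=\; \Omega\!\left(\tfrac{k}{p}\cdot p^2\right) \;=\; \Omega(pk),
\]
which is exactly the claimed bound (the hypothesis $k \ge p$ ensures $m \ge 1$, so the construction is nontrivial).

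The step I expect to require the most care is verifying that Theorem~\ref{thm:grid}'s lower bound is genuinely parametric in the grid side length, so that the $p \times p$ instance really forces $\Omega(p^2)$ vertices independently of how many terminals the instance has --- one has to re-examine the proof to confirm that the planar-separator calculation goes through with $p$ in place of $k$, including the combinatorial cycle-free argument of Lemma~\ref{lem:thruv} whose final inequality depends only on the local grid parameter. Once that is confirmed, the disjoint-union construction and the component decomposition of the minor are completely routine, and the lower bound follows.
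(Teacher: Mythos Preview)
Your proposal is correct and follows essentially the same approach as the paper: both take the disjoint union of $\lfloor k/p\rfloor$ copies of the $p\times p$ grid instance from Theorem~\ref{thm:grid} (the paper packages this as Corollary~\ref{cor:h2}) and sum the per-component $\Omega(p^2)$ lower bounds. Your concern about re-examining the proof of Theorem~\ref{thm:grid} is unnecessary --- that theorem is stated for every $k\in\mathbb{N}$, so instantiating it with $k=p$ is immediate; no inspection of Lemma~\ref{lem:thruv} is required.
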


\begin{proof}
Consider the bound shown in Theorem \ref{thm:grid}. The graph used to obtain this bound is a $k\times k$ grid, and has treewidth $k$. The following corollary holds.
\begin{corollary} \label{cor:h2}
For every $p\in \mathbb{N}$ there exists a graph $G$ with treewidth $p$ and $p$ terminals $R\subseteq V$, such that every distance-preserving minor $G'$ of $G$ with respect to $R$ has $|V'|\ge \Omega(p^2)$. 
\end{corollary}

Let the graph $G$ consist of $\frac{k}{p}$ disjoint graphs $G_i$ with $p$ terminals, treewidth $p$, and distance-preserving minors with $|V'|\ge \Omega(p^2)$ as guaranteed by Corollary \ref{cor:h2}. Any distance-preserving minor of the graph $G$ must preserve (in disjoint components) the distances between the terminals in each $G_i$. The graph $G$ has $k$ terminals, treewidth $p$, and any distance-preserving minor of it has $|V'|\ge \Omega(k\cdot p)$, thus proving Theorem \ref{thm:omegakh}.
\ifprocs\qed\fi
\end{proof}

\section{Minors with Dominating Distances} \label{sec:ConcRem}

The algorithms mentioned in this paper (including the naive one)
actually satisfy a stronger property: 
They output a minor $G'=(V',E',\ell')$ where in effect $V'\subseteq V$
(every vertex in $G'$ can be mapped back to a vertex in $G$) and
distances in $G'$ dominate those in $G$, namely
\begin{equation} \label{eqn:dominating}
  d_{G'}(u,v)\ge d_G(u,v) \qquad \forall u,v\in V'.
\end{equation}
The following theorem proves under this stronger property,
the $O(k^4)$ bound of Theorem~\ref{thm:k4} is tight.

\begin{theorem} \label{thm:dominating}
For every $k$ there exists a graph $G$ and a set of terminals $R\subseteq V$, for which every distance-preserving minor $G'$ where $V'\subseteq V$ and property \eqref{eqn:dominating} holds, has $\Omega(k^4)$ vertices. 
\end{theorem}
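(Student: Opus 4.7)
The plan is to construct a graph $G$ with $k$ terminals and $\Omega(k^4)$ non-terminal vertices, together with an assignment of edge lengths, so that every distance-preserving minor $G'$ satisfying $V' \subseteq V$ and property~\eqref{eqn:dominating} must contain essentially all the non-terminals of $G$. The overall strategy extends Theorem~\ref{thm:grid} in a crucial way: once $V' \subseteq V$ is imposed, one can force each individual branching vertex rather than only exploit a coarse separator argument, and this is precisely what pushes the lower bound from $\Omega(k^2)$ up to $\Omega(k^4)$.

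First I would design $G$ so that the $\binom{k}{2}$ terminal-to-terminal shortest paths form an arrangement in ``general position'': every pair of shortest paths meets at a single non-terminal vertex (a ``crossing''), and all $\binom{\binom{k}{2}}{2} = \Theta(k^4)$ crossing vertices are pairwise distinct. Combinatorially, this is achieved by laying out each shortest path as a long curve and making sure pairwise intersections are transverse crossings with no shared sub-path. Second, I would assign edge lengths using an exponentially separated perturbation scheme in the spirit of Theorem~\ref{thm:grid}, with roughly $\Theta(k^2)$ perturbation parameters (one per terminal pair), ensuring that each shortest path is unique (under consistent tie-breaking) and, crucially, that for every crossing vertex $v$, the vertex $v$ is the \emph{unique} element of $V$ lying simultaneously on the two shortest paths that cross at $v$.

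The key argument is then that every crossing vertex $v$ is forced into $V'$. Suppose for contradiction that $v \notin V'$. By the $V' \subseteq V$ convention, $v$ is absorbed into the branch set of some $w \in V'$. By property~\eqref{eqn:dominating}, every edge of $G'$ has length at least $d_G$ of its endpoints, and since terminal distances are preserved exactly, every edge on a terminal-to-terminal shortest path in $G'$ must be \emph{tight}. Now fix the two shortest paths $P, P'$ that cross at $v$ in $G$, and let $a', b'$ (resp.\ $c', d'$) be the $V'$-vertices closest to $v$ on $P$ (resp.\ $P'$). Tightness of the edges $(a', w)$ and $(w, b')$ used to traverse $P$ in $G'$ forces $d_G(a', w) + d_G(w, b') = d_G(a', b')$, so $w$ lies on a shortest $a'$-to-$b'$ path in $G$; by uniqueness of $P$ this means $w \in V(P)$. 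The analogous argument for $P'$ gives $w \in V(P')$. Hence $w$ lies in $V(P) \cap V(P')$, which by our construction contains only $v$ itself --- contradicting $w \in V'$ and $v \notin V'$. Summing over all $\Theta(k^4)$ crossing vertices yields $|V'| \ge \Omega(k^4)$.

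The main obstacle is the combined step of arranging $\Theta(k^2)$ shortest paths so that their pairwise crossings are all distinct, while simultaneously tuning $\Theta(k^2)$ edge-length perturbations to guarantee both uniqueness of every shortest path and the ``only $v$'' property required above. This is delicate because the $\binom{k}{2}$ shortest paths share a common underlying graph, so the edge-length constraints interact, and formally verifying the required algebraic independence (in the spirit of the nested powers-of-two weights used in Theorem~\ref{thm:grid}) is the central technical hurdle.
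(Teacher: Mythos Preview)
Your tightness observation is correct and is the right starting point: property~\eqref{eqn:dominating} together with exact preservation forces every vertex on a shortest $t_1$--$t_2$ path in $G'$ to lie on the (unique) shortest $t_1$--$t_2$ path $P$ in $G$, and similarly for $P'$. The gap is the step where you introduce $w$. You assert that the edges $(a',w)$ and $(w,b')$ are ``used to traverse $P$ in $G'$'', but nothing forces the $G'$-shortest-path from $t_1$ to $t_2$ to visit $w$ at all: by your own tightness argument that path visits only vertices of $V(P)\cap V'$, and $w$ need not lie on $P$. The fact that $v$ sits in $w$'s branch set is a statement about the minor map, not about which vertices the shortest path in $G'$ uses. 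Once this step fails, you have two vertex-disjoint paths in $G'$ (one inside $V(P)\cap V'$, one inside $V(P')\cap V'$), and there is no contradiction; $v$ can simply be gone. In a purely combinatorial construction there is no mechanism that forces the two $G'$-paths to meet.

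The paper supplies exactly this missing mechanism, and does so geometrically rather than combinatorially. It places the $k$ terminals at random points on the four sides of the unit square, connects top terminals to bottom terminals and left terminals to right terminals by straight segments, and takes $G$ to be the resulting arrangement with Euclidean edge lengths. Randomness guarantees $\Theta(k^4)$ distinct intersection points with probability~$1$, avoiding your acknowledged ``delicate'' perturbation step entirely. Crucially, $G'$ is a minor of a planar graph and can be drawn in the unit square with the surviving vertices at their original positions; any top--to--bottom path in this drawing must cross any left--to--right path, so the two $G'$-shortest-paths are \emph{forced} to share a vertex $v'\in V'$. Then the strict Euclidean triangle inequality (combined with \eqref{eqn:dominating}) shows $v'=v$, the desired contradiction. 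The topological crossing argument is the idea you are missing, and it is precisely why the paper's construction is planar with terminals on the outer boundary rather than an abstract combinatorial arrangement.
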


\begin{proof} Fix $k$; we construct $G$ probabilistically as follows. Consider the unit square $[0,1]\times [0,1]$ in the 2-dimensional Euclidean plane, and on each of its edges place terminals at $\lfloor \frac{k}{4} \rfloor$ points chosen at random. Connect by a straight line the terminals on the top edge with those on the bottom edge, and similarly connect the terminals on the right edge with those on the left edge. There are now $\Theta(k^2)$ ``horizontal'' lines each meeting $\Theta(k^2)$ ``vertical'' lines, and with probability $1$ the horizontal lines intersect the vertical lines at $\Theta(k^4)$ intersection points (because the probability that three lines meet at a single point is $0$). Additional intersection points might exist between pairs of horizontal lines and pairs of vertical lines. 

Let the graph $G$ have both the terminals and the intersection points as its vertices, and their connecting line segments as its edges. Set every edge length to be the Euclidean distance between its endpoints, hence shortest-path distances in $G$ dominate the Euclidean metric between the respective points.

Let $v$ be an intersection point between the top-to-botoom (horizontal) shortest-path $\Pi_G(t_1, t_2)$ and the right-to-left (vertical) shortest-path $\Pi_G(t_3, t_4)$ in $G$. Let $G'$ be a distance-preserving minor of $G$ satisfying property \eqref{eqn:dominating} and assume towards contradiction that $v\notin V'$. It is easy to see that $G'$ can be drawn in the 2-dimensional Euclidean plane in such a way that the surviving vertices and edges remain in the same location, and new edges are drawn inside the unit square. Since every pair of top-to-bottom path and right-to-left path (both inside the unit square) must intersect, the shortest-paths $\Pi_{G'}(t_1, t_2)$ and $\Pi_{G'}(t_3, t_4)$ intersect in some point $v'\in V'$, which must be different from $v$ (because $v\notin V'$). 
But since $v$ is the only vertex in $V\supset V'$ placed on both the straight line between $t_1$ and $t_2$, and the straight line between $t_3$ and $t_4$, one of the paths in $G'$, say without loss of generality $\Pi_{G'}(t_1, t_2)$, visits the point $v'$ and goes outside of its straight line. From property \eqref{eqn:dominating} all distances  in $G'$ dominate those in $G$, and from the construction of $G$ they also dominate the Euclidean metric. Hence, the length of the shortest-path $\Pi_{G'}(t_1, t_2)$ is at least the sum of Euclidean distances $\|t_1-v'\|_2+\|v'-t_2\|_2 > \|t_1-t_2\|_2$, making $d_{G'}(t_1, t_2) > d_G(t_1, t_2)$ in contradiction to the distance-preserving property of $G'$. We conclude that every intersection point between a vertical and a horizontal line in $G$ exists also in $G'$, hence $|V'| \ge \Omega(k^4)$. 
\end{proof}

Theorem \ref{thm:dominating} suggests that narrowing the gap between the current bounds 
$\Omega(k^2) \le f^*(k)\le O(k^4)$,
might require, even for planar graphs, breaking away from the above paradigm
and not satisfying property \eqref{eqn:dominating}.

{ 
\bibliographystyle{alphainit}
\bibliography{biblio}
}

\end{document}